\theoremstyle{plain}
\newtheorem{axiom}{Axiom}
\newtheorem{claim}[axiom]{Claim}
\newtheorem{theorem}{Theorem}[section]
\theoremstyle{remark}
\newtheorem{remark}[theorem]{Remark}
\title{\vspace{-40pt}Doubly ranked tests of location for grouped functional data\vspace{-10pt}}
\author{Mark J. Meyer \\ Department of Mathematics and Statistics, Georgetown University \\ Washington, DC USA}
\date{}
\begin{document}

	\maketitle
	
\begin{abstract}
Nonparametric tests for functional data are a challenging class of tests to work with because of the potentially high dimensional nature of the data. One of the main challenges for considering rank-based tests, like the Mann-Whitney or Wilcoxon Rank Sum tests (MWW), is that the unit of observation is typically a curve. Thus any rank-based test must consider ways of ranking curves. While several procedures, including depth-based methods, have recently been used to create scores for rank-based tests, these scores are not constructed under the null and often introduce additional, uncontrolled for variability. We therefore reconsider the problem of rank-based tests for functional data and develop an alternative approach that incorporates the null hypothesis throughout. Our approach first ranks realizations from the curves at each measurement occurrence, then calculates a summary statistic for the ranks of each subject, and finally re-ranks the summary statistic in a procedure we refer to as a doubly ranked test. We propose two summaries for the middle step: a sufficient statistic and the average rank. As we demonstrate, doubly rank tests are more powerful while maintaining ideal type I error in the two sample, MWW setting. We also extend our framework to more than two samples, developing a Kruskal-Wallis test for functional data which exhibits good test characteristics as well. Finally, we illustrate the use of doubly ranked tests in functional data contexts from material science, climatology, and public health policy.
\end{abstract}

	\section{Introduction}
	\label{s:intro}
	
	Expanded research in functional data analysis has produced a variety of tests couched in the framework of classic nonparametric tests including work by \cite{Hall2007, Lopez2009, Lopez2010, Chak2015, PomannStaicu2016, Lopez2017, Abramowicz2018, Berrett2021} and \cite{Melendez2021}. These various authors have built out hypothesis testing procedures for functional versions of Mann-Whitney or Wilcoxon Rank Sum tests, Anderson-Darling tests, and permutation tests, among others. The primary feature of functional data that makes the development of such nonparametric tests challenging is that the unit of observation is typically a curve. For example, $X_{g,1}(s), \ldots, X_{g,n_g}(s)$ are a sample of $n_g$ curves from groups $g = 1, \ldots, G$. Each curve is typically measured on a fine grid, $\mathcal{S} = \{s : s = s_1, \ldots, s_S\}$ for a total of $S$ measurements. Any rank-based procedure must employ a way to rank curves directly or at least rank summary metrics of each subject's curve. While there are higher dimensional forms of functional data, we limit our current examination to curves. 
	
	Depth is a popular and useful concept for ranking functions and many authors have utilized depth for Mann-Whitney-Wilcoxon (MWW) type tests \citep{Lopez2009,Lopez2010,Lopez2017}, functional boxplots \citep{Sun2011, Genton2014, Wrobel2016}, outlier detection \citep{SunGenton2012, Dai2018, Huang2019, Dai2020, AlemanGomez2022}, and ranking curves \citep{Lopez2009, Lopez2011, Sun2012, Sguera2021}. The basic idea behind depth is to characterize the proportion of time a curve is bounded above and below by another curve in the sample. The resulting scores, when ordered, rank the curves from the inside out with the smallest values representing the extremes, the maximum or minimum, and the largest value representing the median \citep{Lopez2009, Sun2012}. These scores are effectively a measure of the distance a curve is from the middle. MWW-type tests using depth to test for a difference between two groups must construct a third, artificial reference group  \citep{Lopez2009,Lopez2010,Lopez2017}. Depth is first determined by group, then the test procedure counts the number of depth values in each group that are larger than the depth in the reference group. Code is publicly available for the depth-based MWW test, see \cite{Lopez2017}.
	
	Alternative approaches include a spatial sign-based statistic by \cite{Chak2015} and a random projections based approach by \cite{Melendez2021}. In the former,  \cite{Chak2015} develop an MWW-type statistic using the spatial sign for infinite-dimensional data that weakly convergences to a Gaussian distribution under limited assumptions.  The spatial sign is used as a distance metric to construct the test, taking pairwise distances between groups with data assumed to arise from a random variable defined on a separable Hilbert space. The latter alternative approach, a random projection-based test by \cite{Melendez2021}, maps the points from the high dimensional functional space to a randomly chosen low-dimensional space defined using Brownian motion. This approach has application in parametric tests for grouped functional as well, see for example \cite{Cuesta2010}. The random projections-based approach effectively generates scores for each subject via a random basis function and an integral approximation of the subject-specific curve. It then treats these scores as the data in a traditional MWW test, thus the scores are ranked and then summed by group. Neither of these alternative methods have publicly available code.
	
	Both depth- and random projection-based approaches perform well in larger samples \citep{Lopez2009,Lopez2010,Melendez2021}. However, each method also introduces an additional and uncontrolled for source of variation: the random artificial reference group for the depth-based approach and the brownian motion-based basis functions for the random projections. While depth is useful for ranking curves particularly when identifying medians and outliers, it does not produce a score that ranks in a traditional fashion, from smallest to largest. When using random projections, the integration is performed before constructing ranks which ignores the fact that rank may be dynamic over time. To handle the curve ranking problem, both procedures rely on subject-specific summary scores, although neither constructs their scores under the null. Thus, the tests are only conducted under the null when comparing the scores between groups. 
	
	The spatial sign-based approach differs from the depth- and random projection-based methods in that, for finite-dimensional approximations of functional data, it effectively creates a time-point specific score instead of subject-specific scores over time.
	\cite{Chak2015} show that the spatial sign-based test performs well empirically when the underlying data is Gaussian with a relatively small sample. The test does consider the MWW null throughout its construction, but appears to be underpowered when the data is non-Gaussian. While all of these methods are formulated for the MWW or two group case, to our knowledge no prior work considers the Kruskal-Wallis (KW) or three or more group case. Thus, with respect to these authors, we believe that a further examination of this problem is warranted. 
	
	In this manuscript, we re-examine rank-based test settings for comparing groups of functional data, where the data are curves. We propose a general testing procedure working within the MWW and KW assumptions, developing our tests under the relevant null hypotheses at all stages---a general overview of MWW and KW tests as well as functional data is given in Section~\ref{s:statframe}. Our approach first constructs ranks of either the raw data or preprocessed functional data at each discretely measured time point giving every subject a curve of ranks. We then construct a summary statistic for each subject using one of two choices: a sufficient statistic or the average rank (Section~\ref{s:drts}). The former is derived from the exact and approximate distribution of each subject's $r$th order statistic at a given time point under $H_0$ . The resulting approximate distribution can be shown to be an exponential family member from which we obtain sufficient statistics for each subject's value of $r$. We also examine the properties of the average rank under $H_0$. The summaries resulting from either the sufficient statistic or average rank become our new data, or rank-based scores, which are then ranked and analyzed in the usual way by the relevant tests. Because the MWW and KW tests ultimately re-rank our rank-based scores, we refer to our testing procedures as doubly ranked tests.
	
	We empirically demonstrate the power of doubly ranked tests in Section~\ref{s:sim}. Our approach can accommodate curves with a potentially large number of measurements and groups of size $G \geq 2$. Further, we show that doubly ranked tests perform well in both small and larger sample sizes. We illustrate their use with an analysis of data from several different studies across different substantive fields (Section~\ref{s:app}). The illustrations include data on resin viscosity, Canadian weather patterns, and changes in driving requests in the United States prior to and just after the initiation of COVID-19 policy measures in various states. While we formulate doubly ranked tests in the functional data context with data that is preprocessed using functional principal components, the tests are applicable to other high dimensional data testing settings. We discuss this, and our work in general, in Section~\ref{s:disc}.
		
	\section{Statistical Framework}
	\label{s:statframe}

	Because our method blends several statistical frameworks, we now briefly describe each beginning with the MWW and KW tests in their scalar forms. We then discuss some general concepts for functional data. Functional data analysis is a large area of research. Review articles by \cite{Morris2015}, \cite{Wang2016}, and \cite{Greven2017}, for example, or the text by \cite{Ramsay2005} will provide more in-depth discussions of this topic.
	
	\subsection{Mann-Whitney-Wilcoxon}
	
	Let $X_1, \ldots, X_{n_1}$ be observed data from group 1, $Y_1, \ldots, Y_{n_2}$ be observed data from group 2, and $n = n_1 + n_2$. Assuming the groups are independent, the null and alternative hypotheses can have one of two forms. The first tests for stochastic ordering where the null is $H_0: F_X(c) = F_Y(c)$ and the alternative is $H_A: F_X(c) \leq F_Y(c)$ for the CDFs $F_X(c)$ and $F_Y(c)$ in each group evaluated at some real valued constant $c$. Assuming these distributions are the same up to a location shift $\Delta$, i.e. $F_X(c) = F_Y(c-\Delta)$, gives a second form of the hypotheses which is $H_0: \Delta = 0$ and $H_A: \Delta \neq 0$. 
	
%
	The test statistic to test either form of $H_0$ is $T^+ = T - \frac{n_2(n_2 + 1)}{2}$ where $T = \sum_{i=1}^{n_2} R[Y_i]$ for $R[Y_i]$ equal to the rank of $Y_i$ among the combined samples, i.e. among $X_1, \ldots, X_{n_1},$ $Y_1, \ldots, Y_{n_2}$. The sum of the ranks, $T$, is the test statistic for the Wilcoxon rank sum test \citep{Wilcoxon1945,Kloke2015}. $T^+$ is also equivalent to the test statistic described by \cite{MannWhitney1947}. Consequently, we refer to this test as the MWW test.
	
	The exact distribution of $T$ is non-standard but can be easily computed by standard statistical software for samples up to $n = 50$, provided there are no tied ranks. For larger samples and in the presence of ties, a normal approximation to the distribution of the ranks is used instead, see the \texttt{wilcox.test()} documentation in the \texttt{stats} package in \texttt{R} \citep{Stats2022}. The test statistic based off of the normal approximation is
	\begin{align*}
		T_z = \left[T - \frac{n_2(n+1)}{2}\right]\bigg/ \sqrt{\frac{n_1 n_2 (n+1)}{12}},
	\end{align*}
	which has a standard normal distribution.

	\subsection{Kruskal-Wallis}
	
	The KW test generalizes the MWW to three or more groups \citep{KruskalWallis1952,Kloke2015}. Let $G$ denote the total number of independent groups with data $X_{g,1}, \ldots, X_{g,n_g}$ for $g = 1, \ldots, G$. The KW test assumes the distributions of the data are the same except possibly for some location parameter or parameter(s). The null hypothesis for the KW test generalizes the second MWW null where groups differ only up to some set of location shifts, $\Delta_g$. The null then has the form $H_0 : \Delta_1 = \ldots = \Delta_G$ with the alternative hypothesis, $H_A : \Delta_g \neq \Delta_{g'}$ for some $g \neq g'$. 
	
	For $n = \sum_{g=1}^G n_g$, the test statistic to test $H_0$ is
	\begin{align*}
		H = \frac{12}{n(n+1)} \sum_{g = 1}^G n_{g} \left( \bar{R}_g - \frac{n+1}{2} \right)^2,
	\end{align*}
	where $\bar{R}_g$ is the average rank in group $g$, i.e. $\bar{R}_g = n_g^{-1} \sum_{i=1}^{n_g} R[X_{g,i}]$. Under $H_0$, $H$ is distribution-free. Tables for its exact distribution do exist, see for example \cite{Hollander1999}, Chapter 6. $H$ is asymptotically $\chi_{G-1}^2$ under $H_0$ as well, see \cite{KruskalWallis1952} or \cite{Kloke2015} which is what the the \texttt{stats} package in \texttt{R} uses when \texttt{kruskal.test()} is called \citep{Stats2022}.
	
	\subsection{Functional Data}

	\begin{figure}
		\centering
		\includegraphics[width = 2.1in, height = 2.1in]{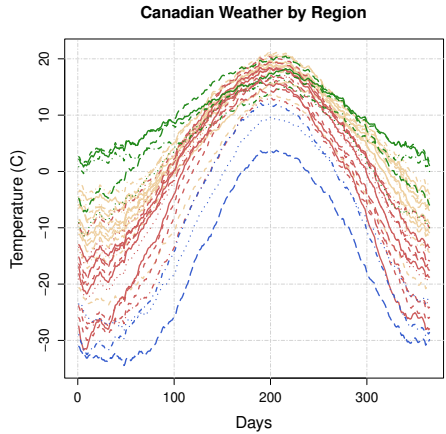}
		\caption{Blue curves denote the stations in the Arctic region, tan curves indicate stations the Atlantic region, red curves for stations in the Continental region, and green curves are the Pacific region.\label{f:cwtemp}}
	\end{figure}
	
	Functional data are data observations whose units are commonly curves, often measured over time but they can be measured over location as well. These curves are typically sampled at a high frequency over a pre-defined interval. The data comes sampled on a grid, $\{s : s = s_1, \ldots, s_S\}$ for example as we note in Section~\ref{s:intro}. The data used for analysis may be the raw values themselves, $X_1(s), \ldots, X_{n_1}(s)$, or they may be preprocessed and projected into a well-behaved, often lower dimensional space using a known basis function. The idea behind preprocessing is to model the underlying function from which the raw data values were observed and then analyze the functions themselves. We denote the preprocessed data with $\tilde{X}_1(s), \ldots, \tilde{X}_{n_1}(s)$. There are many ways to preprocess functional data including functional principal components, wavelet-based techniques, and Fourier transforms \citep{Wang2016}. Our method is compatible with either raw or pre-processed data, but for this work we will preprocess functional data using a sandwich smoother-based approach referred to as Fast Covariance Estimation or FACE \citep{Xiao2013, Xiao2016}. FACE is a nonparametric approach for constructing functional principal components based on a set of curves with a potentially large sampling density. It can be implemented using the \texttt{refund} package in \texttt{R} \citep{Refund2022}. As an example, Figure~\ref{f:cwtemp} contains the preprocessed (using FACE and retaining 99\% of the variability) daily temperature curves from the Canadian Weather dataset \citep{Ramsay2005,Ramsay2009}. The dataset contains daily temperature and precipitation measurements, averaged over the years 1960 to 1994 from 35 different sites across Canada. We analyze this data in Section~\ref{s:app}.
	
	\section{Doubly Ranked Tests}
	\label{s:drts}
	
	
	Suppose we obtain a sample of curves from two or more groups: $X_{g,1}(s), \ldots, X_{g, n_g}(s)$ for $g = 1, \ldots, G$, $s = s_1, \ldots, s_S$, and $n_1 + \ldots + n_g + \ldots + n_G = n$. Measurements are discretely sampled on the same grid, $\mathcal{S} = \{s : s = s_1, \ldots, s_S\}$, and come from a continuous process. Under the null, we assume no difference in the curves between the $G$ groups across all $s$. To avoid having to make a distributional assumption on the processes that generate $X_{g,i}(s)$, we evaluate this null by analyzing the ranks of $X_{g,i}(s)$ or $\tilde{X}_{g,i}(s)$ at each measurement occurrence $s$, i.e. time point or location depending on what the function is measured over. Thus, we first rank either the raw or preprocessed data at each measurement occurrence, ignoring group assignment. This results in a curve of ranks for each subject which we denote as $z_{g,i}(s)$. If there is no difference in the curves across $s$ under the null, then we should also observe no difference in the ranks across $s$. Assuming each subject's curve has a true order in the sample, we can consider $\{z_{g,i}(s_1), \ldots, z_{g,i}(s_S)\}$ as a set of draws of size $S$ from the distribution describing that true order.
	
	As in the univariate tests, the ranks are randomly assigned under $H_0$. Thus, for each subject and at each measurement occurrence, $z_{g,i}(s) \stackrel{H_0}{\sim} \mathcal{U}\{1,n\}$ where $\mathcal{U}$ denotes the discrete uniform distribution. Under $H_0$, we make several observations:
	
	\begin{remark}\label{rm:id}
		The sample  $\{z_{g,i}(s_1), \ldots, z_{g,i}(s_S)\}$ is identically distributed.
	\end{remark}

	\begin{remark}\label{rm:ev}
		For a single value of $s$, $E[z_{g,i}(s)] = \frac{n+1}{2} < \infty$ for $n < \infty$.
	\end{remark}

	\begin{remark}\label{rm:var}
		For a single value of $s$, $Var[z_{g,i}(s)] = \frac{n^2 - 1}{12} < \infty$ for $n < \infty$.
	\end{remark}		
	 After the first stage of ranking, we construct a summary statistic for each curve of ranks and re-rank the summaries for analysis. Assuming the summaries are reasonable representations of the subject's order within the sample, no difference in the curves and subsequent ranks over $s$ between the groups implies no difference in the summarized ranks between the groups. The doubly ranked test inherits the null and alternatives from the MWW and KW tests since the last step analyzes the summaries using either the MWW or KW. From this, we can construct a more formal set of hypotheses.
	 
	 Via Remark~\ref{rm:id}, it is reasonable to assume the distributions of the summary statistics are the same up to some potential location shift. Let $t[z_{g,i}(\mathcal{S})]$ denote the summary statistic for a subject in group $g$ and suppose it is a realization of the random variable $T_g$. The doubly ranked MWW test then has the null and alternative of $H_0: \theta = 0$ and $H_0: \theta \neq 0$ where $\theta$ is a parameter such that $F_{T_1}(c) = F_{T_2}(c-\theta)$ for some real valued constant $c$. When $G \geq 3$, the null and alternative hypotheses for the doubly ranked KW test are $H_0 : \theta_1 = \ldots = \theta_G$ with the alternative hypothesis, $H_A : \theta_g \neq \theta_{g'}$ for some $g \neq g'$ where $\theta_g$ is the location parameter for the distribution $F_{T_g}(c)$. In other words, under $H_0$ we assume there is no difference in the location parameters of the distributions of $t[z_{g,i}(\mathcal{S})]$ between the groups. The alternative is that the there is a difference, when $G = 2$, or there is at least one difference, when $G \geq 3$. 
	
	Our testing procedure for grouped functional data proceeds as follows: 1) preprocess the data using FACE, ignoring group assignment, to obtain $\tilde{X}_{g,1}(s), \ldots, \tilde{X}_{g, n_g}(s)$; 2) generate ranks at each time point $s$, again ignoring group assignment; 3) calculate a $t[z_{g,i}(\mathcal{S})]$; 4) perform the relevant test (MWW or KW), depending on $G$. The doubly ranked MWW test statistic is
	\begin{align*}
		T_{DR}^+ = \sum_{j=1}^{n_2} R\left( t\left[z_{2,j}(\mathcal{S})\right] \right) - \frac{n_2(n_2 + 1)}{2}.
	\end{align*}
	For $G \geq 3$, the doubly ranked KW test statistic is
	\begin{align*}
		&H_{DR} = \frac{12}{n(n+1)} \sum_{g = 1}^G n_{g} \left[ \overline{R\left( t\left[z_{g,i}(\mathcal{S})\right] \right)} - \frac{n+1}{2}\right]^2,
	\end{align*}
	where $\overline{R\left( t\left[z_{g,i}(\mathcal{S})\right] \right)} = \frac{1}{n_g} \sum_{i=1}^{n_g} R\left( t\left[z_{g,i}(\mathcal{S})\right] \right)$.The statistic $t\left[z_{g,i}(\mathcal{S})\right]$ is a univariate score for each subject which we use as the ``new'' data in each backend test. Thus, the distributional results from Section~\ref{s:statframe} hold for both the doubly ranked MWW and KW tests. That is, $T_{DR}^+$ is approximately normal while $H_{DR}$ is asymptotically $\chi_{G-1}^2$. 
	
	 To obtain the summary statistic, $t[z_{g,i}(\mathcal{S})]$, we consider two approaches: a sufficient statistic derived from the distribution of the $r$th order statistic and the sample average of the ranks.

	\subsection{Sufficient Statistic}
	
	The summarizing step is a dimensionality reduction step which can be accomplished using a statistic that is sufficient for a curve's order within the sample. Let $Z_{(r)}$ denote the random variable describing the sampling distribution of the $r$th order statistic at time $s$. We let $z$ generically denote an observed rank for ease of derivation. The exact PMF for $Z_{(r)}$ is given by
\begin{align}
	P\bigg[ Z_{(r)} = z &\bigg] =  \frac{\Gamma(n+1)}{\Gamma(r)\Gamma(n - r + 1)} \label{eq:pzr}\\
	&\times \left[ \int_{\frac{z}{n}-\frac{1}{n}}^{\frac{z}{n}} t^{r-1}\left(1-t\right)^{n-r } dt \right].\nonumber
\end{align}
A full derivation of $P\left[ Z_{(r)} = z \right]$ is in Section 1 of Supplement A \citep{DRTSup2024A}.

The exact distribution is non-standard and difficult to work with. However, we can approximate the integral using the midpoint rule and selecting a single interval, $[\frac{z}{n} - \frac{1}{n}, \frac{z}{n}]$, with a midpoint at $\frac{z}{n} - \frac{1}{2n}$. 
The mass function in~\eqref{eq:pzr} can then be approximated by
\begin{align}
	P\bigg[ Z_{(r)} = z &\bigg] \approx \frac{\Gamma(n+1)}{\Gamma(r)\Gamma(n - r + 1)} \frac{1}{n}  \label{eq:pmf} \\
	& \times  \left(\frac{z}{n} - \frac{1}{2n}\right)^{r-1} \left(1-\frac{z}{n} + \frac{1}{2n}\right)^{n-r}, \nonumber
\end{align}
with the approximation improving as $n$ increases. We observe this graphically in Section 1 of Supplement A \citep{DRTSup2024A}. When $r = \frac{n+1}{2}$, the expected rank under $H_0$, the approximation is quite good.


\begin{claim}\label{cl:ef}
The distribution described by the mass function in Equation~\ref{eq:pmf} is an exponential family member.
\end{claim}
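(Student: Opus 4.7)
The plan is to rewrite the approximate PMF in equation~\eqref{eq:pmf} in the canonical one-parameter exponential family form $f(z) = h(z)\,c(r)\exp\!\bigl(w(r)\,t(z)\bigr)$, treating $n$ as a fixed, known sample size and $r$ as the parameter indexing the family. The key algebraic move is to abbreviate $u = u(z) = \tfrac{z}{n} - \tfrac{1}{2n} = \tfrac{2z-1}{2n}$, so that the approximation becomes
\begin{equation*}
P\bigl[Z_{(r)} = z\bigr] \;\approx\; \frac{\Gamma(n+1)}{n\,\Gamma(r)\Gamma(n-r+1)}\, u^{r-1}(1-u)^{n-r},
\end{equation*}
and then to separate the powers involving $r$ from those that do not by writing
\begin{equation*}
u^{r-1}(1-u)^{n-r} \;=\; \frac{(1-u)^{n}}{u}\left(\frac{u}{1-u}\right)^{\!r}.
\end{equation*}

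From this rewriting, I would read off the four exponential family ingredients directly:
\begin{equation*}
h(z) = \frac{(1-u)^{n}}{u}, \quad c(r) = \frac{\Gamma(n+1)}{n\,\Gamma(r)\Gamma(n-r+1)}, \quad w(r) = r, \quad t(z) = \log\frac{u}{1-u}.
\end{equation*}
Substituting $\exp\!\bigl(w(r)\,t(z)\bigr) = (u/(1-u))^{r}$ recovers the PMF, establishing that the approximate law of $Z_{(r)}$ belongs to a one-parameter exponential family with natural parameter $r$ and sufficient statistic $t(z) = \log\bigl[(2z-1)/(2n-2z+1)\bigr]$. A brief support check---noting that $u \in (0,1)$ whenever $z \in \{1,\ldots,n\}$---is enough to see that $h$ is strictly positive and $t$ is well-defined on the support.

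There is no serious obstacle here; the claim is a structural observation, and the whole proof reduces to the single factorization above. The only point worth flagging in the write-up is bookkeeping: because $r$ (not $n$) is the parameter, both the base measure $h$ and the normalizing constant $c$ are permitted to depend on $n$, and one should state this explicitly to avoid confusion. This factorization also furnishes the concrete summary $t[z_{g,i}(\mathcal{S})]$ needed downstream, since exponential family sufficiency then implies that $\sum_{s\in\mathcal{S}} t(z_{g,i}(s))$ is sufficient for the subject-specific rank parameter.
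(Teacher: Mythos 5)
Your proof is correct and is essentially identical to the paper's: the same factorization $h(z)c(r)\exp\left[w(r)t(z)\right]$ with $w(r)=r$, the same normalizer $c(r)=\frac{\Gamma(n+1)}{n\,\Gamma(r)\Gamma(n-r+1)}$, and the same $h(z)$ and $t(z)$ (your $u$-notation just abbreviates the paper's $\frac{z}{n}-\frac{1}{2n}$). The support check and the remark that $h$ and $c$ may depend on the fixed $n$ are sensible bookkeeping additions but do not change the argument.
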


\begin{proof}
The proof of Claim~\ref{cl:ef} is in Appendix \ref{appx:ef}.
\end{proof}

Since Equation~(\ref{eq:pmf}) is an exponential family, $t(z) = \log\left[ \left(\frac{z}{n}- \frac{1}{2n}\right)\bigg/\left(1-\frac{z}{n} + \frac{1}{2n}\right) \right]$ is a sufficient statistic for $r$. Noting that, under $H_0$, $z \sim \mathcal{U}\{1,n\}$, this statistic has a useful property when applied to the sample of ranks. 

\begin{claim}\label{cl:ssm}
Under $H_0$, $E[(t(z)] = 0$.
\end{claim}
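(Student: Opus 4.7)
The plan is to exploit a simple symmetry of $t(z)$ about the center of the uniform support. First, I would clear the inner fractions to rewrite
\[
t(z) \;=\; \log\!\left[\frac{\frac{z}{n} - \frac{1}{2n}}{1 - \frac{z}{n} + \frac{1}{2n}}\right] \;=\; \log\!\left[\frac{2z - 1}{2n - 2z + 1}\right],
\]
which is well-defined for every $z \in \{1,\ldots,n\}$ since both $2z-1$ and $2n-2z+1$ are strictly positive on this range.

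Next, I would consider the reflection $z \mapsto n - z + 1$ on $\{1,\ldots,n\}$, noting that this map is a measure-preserving bijection for the distribution $\mathcal{U}\{1,n\}$. A direct substitution shows
\[
2(n - z + 1) - 1 \;=\; 2n - 2z + 1, \qquad 2n - 2(n - z + 1) + 1 \;=\; 2z - 1,
\]
so numerator and denominator inside the logarithm swap. Consequently $t(n - z + 1) = -t(z)$, i.e.\ $t$ is odd with respect to this reflection.

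Combining these two facts, under $H_0$ we have $z \stackrel{d}{=} n - z + 1$, hence
\[
E[t(z)] \;=\; E[t(n - z + 1)] \;=\; -E[t(z)],
\]
which forces $E[t(z)] = 0$. There is no real obstacle here beyond recognizing the log-odds structure of $t(z)$ and the built-in symmetry of the discrete uniform distribution; the argument is a one-line symmetry computation once $t(z)$ has been put in the cleaner form above.
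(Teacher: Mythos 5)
Your proof is correct, and it reaches the conclusion by a genuinely cleaner route than the paper. The paper's argument writes $E[t(z)] = \frac{1}{n}\log\prod_{z=1}^n \left(\frac{z}{n}-\frac{1}{2n}\right)\big/\left(1-\frac{z}{n}+\frac{1}{2n}\right)$, expands the product term by term, observes that the $z$th factor is the reciprocal of the $(n+1-z)$th factor, and then splits into cases for $n$ even and $n$ odd (with a leftover middle factor equal to $1$ in the odd case) to conclude the product is $1$. Your version isolates the same cancellation as a structural fact: after simplifying to $t(z)=\log\frac{2z-1}{2n-2z+1}$, the statistic is odd under the reflection $z\mapsto n-z+1$, which preserves $\mathcal{U}\{1,n\}$, so $E[t(z)]=-E[t(z)]=0$. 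The underlying pairing of $z$ with $n+1-z$ is identical in both arguments, but your formulation avoids the explicit product expansion and the even/odd casework entirely, and it makes transparent \emph{why} the expectation vanishes (the log-odds form of $t$ is antisymmetric about the center of the support). The paper's computation has the minor side benefit of exhibiting the product's value directly, but your symmetry argument is shorter, complete, and would generalize immediately to any statistic odd under that reflection.
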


\begin{proof}
The proof of Claim~\ref{cl:ssm} is in Appendix \ref{appx:ev}.
\end{proof}

Given a sample of ranks over $s$ for subject $i$ in group $g$, $r$ could be considered the true order of subject $i$'s curve within the sample, under $H_0$. Thus we obtain our first approach for summarizing $z_{g,i}(s)$ over $s$:
\begin{align}
	&t_u\left[z_{g,i}(\mathcal{S})\right] = \frac{1}{S} \sum_{s=1}^S\log\left[ \frac{\frac{z_{g,i}(s)}{n}- \frac{1}{2n}}{1-\frac{z_{g,i}(s)}{n} + \frac{1}{2n}} \right]. \label{eq:suff}
\end{align}

	\begin{claim}\label{cl:suff} 
		Under $H_0$, $E\left[t_u\left\{z_{g,i}(\mathcal{S})\right\}\right] = 0$
	\end{claim}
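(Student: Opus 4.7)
The plan is to reduce Claim \ref{cl:suff} directly to Claim \ref{cl:ssm} using linearity of expectation together with Remark \ref{rm:id}. Since $t_u[z_{g,i}(\mathcal{S})]$ is defined as $\frac{1}{S}$ times a finite sum of the per-occurrence sufficient statistics $t(z_{g,i}(s))$, and since expectation is linear, I would first swap $E$ and the finite sum to obtain
\[
E\left[t_u\{z_{g,i}(\mathcal{S})\}\right] = \frac{1}{S}\sum_{s=1}^S E\bigl[t(z_{g,i}(s))\bigr].
\]
No integrability subtleties arise here because $z_{g,i}(s)$ takes only finitely many values in $\{1,\dots,n\}$ and $t$ is bounded on that finite set.

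Next I would invoke the observation made just before Remark \ref{rm:id}, namely that under $H_0$ the ranks at each measurement occurrence satisfy $z_{g,i}(s) \stackrel{H_0}{\sim} \mathcal{U}\{1,n\}$, and that Remark \ref{rm:id} guarantees the sample $\{z_{g,i}(s_1),\dots,z_{g,i}(s_S)\}$ is identically distributed. Thus the marginal distribution of each $z_{g,i}(s)$ matches the distribution for which Claim \ref{cl:ssm} was established, so $E[t(z_{g,i}(s))] = 0$ for every $s = s_1,\dots,s_S$. Substituting this into the display above gives $E[t_u\{z_{g,i}(\mathcal{S})\}] = \frac{1}{S}\sum_{s=1}^S 0 = 0$, as required.

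There is essentially no obstacle: the claim is a direct, one-line consequence of Claim \ref{cl:ssm} after linearity and the identically-distributed observation of Remark \ref{rm:id}. The only point worth flagging is that the argument does \emph{not} require the ranks $z_{g,i}(s_1),\dots,z_{g,i}(s_S)$ to be independent across $s$ (indeed they are typically strongly dependent through the underlying curve), because linearity of expectation holds without independence. This is the conceptual subtlety I would highlight in the write-up to preempt a reader's worry about dependence across the grid $\mathcal{S}$.
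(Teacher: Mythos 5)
Your proof is correct and follows exactly the paper's own route: apply linearity of expectation to the finite sum in Equation~\eqref{eq:suff}, then use Remark~\ref{rm:id} and Claim~\ref{cl:ssm} to conclude each term has expectation zero. Your added remark that independence across $\mathcal{S}$ is not needed is accurate and a worthwhile clarification, but the argument itself matches the paper's.
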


	\begin{proof}
		 Via Claim~\ref{cl:ssm} and Remark~\ref{rm:id}, Claim~\ref{cl:suff} is straightforward to show after applying $E[\cdot]$ to the summation in Equation~\eqref{eq:suff}.
	\end{proof}

	The sufficient statistic, $t[z_{g,i}(s)]$, equals zero only when $z_{g,i}(s) = \frac{n+1}{2}$, which is the expected value of $z_{g,i}(s)$ via Remark~\ref{rm:ev}. When $S = 1$, the doubly ranked tests will be equivalent to their univariate analogue, i.e. the standard MWW or KW test, when using this summary since $t(z)$ is a one-to-one transformation of $z$.

	\subsection{Average Rank}

	Define the statistic $t_a\left[z_{g,i}(\mathcal{S})\right] = \frac{1}{S}\sum_{k=1}^S z_{g,i}(s_k)$ which takes the sample average of the observed ranks over the grid $\mathcal{S}$. Because the average is taken over time (or location) for a given subject and not over $n$, we now establish several simple results for $t_a\left[z_{g,i}(\mathcal{S})\right]$ under $H_0$.
		
	
	\begin{claim}\label{cl:arm} 
		$E\left[t_a\left\{z_{g,i}(\mathcal{S})\right\}\right] = \frac{n + 1}{2}$.
	\end{claim}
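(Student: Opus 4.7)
The plan is to prove Claim~\ref{cl:arm} by a direct application of linearity of expectation, leveraging the two earlier remarks about the marginal distribution of $z_{g,i}(s)$ under $H_0$. Specifically, I will push the expectation inside the finite sum defining $t_a$, invoke Remark~\ref{rm:id} to confirm that each summand has the same distribution as every other, and then apply Remark~\ref{rm:ev} to evaluate each term in closed form.

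Concretely, first I would write
\begin{align*}
E\left[t_a\{z_{g,i}(\mathcal{S})\}\right] = E\left[\frac{1}{S}\sum_{k=1}^{S} z_{g,i}(s_k)\right] = \frac{1}{S}\sum_{k=1}^{S} E\left[z_{g,i}(s_k)\right],
\end{align*}
where the interchange of expectation and the (finite) summation is justified without any integrability subtlety because each $z_{g,i}(s_k)$ takes values in the finite set $\{1,\ldots,n\}$. Next, Remark~\ref{rm:id} tells us that the collection $\{z_{g,i}(s_1),\ldots,z_{g,i}(s_S)\}$ is identically distributed under $H_0$, so the marginal expectation $E[z_{g,i}(s_k)]$ does not depend on $k$. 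Remark~\ref{rm:ev} then evaluates this common value as $\frac{n+1}{2}$, which is finite whenever $n$ is finite. Substituting back yields
\begin{align*}
E\left[t_a\{z_{g,i}(\mathcal{S})\}\right] = \frac{1}{S}\sum_{k=1}^{S} \frac{n+1}{2} = \frac{n+1}{2},
\end{align*}
which is the stated claim.

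There is essentially no obstacle here: this is a one-line consequence of the earlier remarks, parallel in spirit to the brief argument given for Claim~\ref{cl:suff}. The only modelling point worth flagging (though not strictly needed for the proof) is that no independence between the $z_{g,i}(s_k)$ across $k$ is required; linearity of expectation holds regardless of the dependence structure induced by the underlying curve, so the result is robust to the strong within-subject dependence we actually expect in functional data.
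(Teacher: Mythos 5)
Your proof is correct and is exactly the argument the paper intends: apply linearity of expectation to the finite sum, use Remark~\ref{rm:id} to see the terms share a common mean, and evaluate that mean via Remark~\ref{rm:ev}. The paper's own proof is just a one-line statement of this same route, so your write-up is simply a fuller version of it.
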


	\begin{proof}
		 Given Remarks~\ref{rm:ev} and~\ref{rm:id}, Claim~\ref{cl:arm} is straight-forward to establish by applying $E[\cdot]$ to the sum.
	\end{proof}
	
	An immediate result of Claim~\ref{cl:arm} is that $t_a\left[z_{g,i}(\mathcal{S})\right]$ is unbiased for the mean of the distribution of ranks under $H_0$. If we additionally assume that the  $\{z_{g,i}(s_1), \ldots, z_{g,i}(s_S)\}$ are independent as well as identically distributed, we have the following:
	
	\begin{claim}\label{cl:arc} 
		As $S \rightarrow \infty$, $t_a\left\{z_{g,i}(\mathcal{S})\right\} \stackrel{p}{\longrightarrow} \frac{n + 1}{2}$.
	\end{claim}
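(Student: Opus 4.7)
The plan is to recognize Claim~\ref{cl:arc} as essentially an application of the weak law of large numbers. By the hypothesis stated immediately before the claim, $\{z_{g,i}(s_1), \ldots, z_{g,i}(s_S)\}$ is an i.i.d.\ sample (Remark~\ref{rm:id} gives identical distribution, and we are additionally assuming independence), with common mean $(n+1)/2$ by Remark~\ref{rm:ev} and common finite variance $(n^2-1)/12$ by Remark~\ref{rm:var}, where $n$ is held fixed while $S$ grows.

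First, I would invoke Claim~\ref{cl:arm} (or Remark~\ref{rm:ev} directly) to pin the mean of $t_a\{z_{g,i}(\mathcal{S})\}$ at $(n+1)/2$. Second, I would compute the variance of the sample average using independence:
\begin{align*}
\mathrm{Var}\!\left[t_a\{z_{g,i}(\mathcal{S})\}\right] = \frac{1}{S^2}\sum_{k=1}^S \mathrm{Var}[z_{g,i}(s_k)] = \frac{n^2-1}{12\,S},
\end{align*}
which tends to $0$ as $S \to \infty$ since $n$ is fixed and finite (Remark~\ref{rm:var}). Third, I would apply Chebyshev's inequality: for any $\varepsilon > 0$,
\begin{align*}
P\!\left(\left|t_a\{z_{g,i}(\mathcal{S})\} - \tfrac{n+1}{2}\right| \geq \varepsilon\right) \leq \frac{n^2-1}{12\,S\,\varepsilon^2} \longrightarrow 0,
\end{align*}
which is precisely convergence in probability to $(n+1)/2$. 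Alternatively, one could cite Khintchine's weak law of large numbers directly, since the i.i.d.\ and finite-mean conditions are clearly met.

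There is no genuine obstacle here, as the result is a one-line consequence of WLLN once the finite-variance and i.i.d.\ conditions are cataloged. The only point worth emphasizing is that $n$ plays the role of a fixed parameter while the averaging index $S$ is the one going to infinity, so the variance bound $(n^2-1)/(12S)$ is legitimately vanishing. It is also worth remarking that the independence assumption across measurement occurrences is substantive for a functional datum; in practice ranks at nearby $s$ are likely dependent, but under $H_0$ the marginal uniformity is what drives the mean result, and consistency then follows from any law of large numbers that accommodates the actual dependence structure (e.g., an ergodic or mixing version). Within the scope of this claim, however, the i.i.d.\ hypothesis is assumed and the Chebyshev argument above suffices.
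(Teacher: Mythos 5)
Your proposal is correct and takes essentially the same route as the paper, which simply invokes the law of large numbers under Remarks~\ref{rm:id}--\ref{rm:var} together with the independence assumption on the ranks; you have merely filled in the standard Chebyshev details. No gaps.
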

	
	\begin{proof}
		Claim~\ref{cl:arc} results from the law of large numbers under Remarks~\ref{rm:id} to~\ref{rm:var} and assuming the ranks are independently assigned over $\mathcal{S}$ under $H_0$.
	\end{proof}

	The assumption of independence implies that the ranks at time $s_k$ are assigned independently from the ranks at time $s_{k'}$ for $k \neq k'$. While this assumption is strong for functional data, Claim~\ref{cl:arc} only assumes it on the ranks, not the data, and only under the null. In the absence of the independence assumption, Claim~\ref{cl:arm} will still hold. When $S = 1$, the doubly ranked tests will also be equivalent to their univariate analogue when using the average rank.
	
	\subsection{Implementation}

Implementation of our method is done through the \texttt{R} package \texttt{runDRT} which is publicly available on \texttt{CRAN}. The code can implement both summary statistics though it defaults to the sufficient statistic, see the package documentation for details \citep{runDRT2024}.

	\section{Simulation Study}
	\label{s:sim}
	
	In our empirical study, we assume a balanced design with two groups of size $n_1 = n_2 = 10, 25$ and 50 in the MWW case and a third group in the KW case with $n_3$ defined similarly. The total sample sizes are $n = 20, 50,$ and 100 (MWW) and $n = 30, 75,$ and 150 (KW). To simulate the functional data, we first use the approach employed by \cite{Chak2015} who generate data in the MWW setting using a Karhunen-Lo\`eve expansion:
	\begin{align*}
		X_{1,i}(s) = \sum_{k=1}^K \sqrt{2} \left[(k - 0.5)\pi \right]^{-1} Z_{ik} \sin\left[(k - 0.5)\pi s\right],
	\end{align*}
	for those in group 1; \cite{Berrett2021} also take a similar approach. The value $Z_{ik}$ is a random variable which we generate from one of two distributions: Gaussian or $t_2$ (a $t$ distribution with 2 degrees of freedom). The former induces a Gaussian process while latter induces a $t$-process. Formally, the value of $K$ could be infinite. In simulation, we take it to be large, $K = 1000$ basis functions. To generate the data in the remaining group(s), we use
	\begin{align*}
		&X_{g,j}(s) = \\
		&\mu(s) + \sum_{k=1}^K \sqrt{2} \left[(k - 0.5)\pi \right]^{-1} Z_{jk} \sin\left[(k - 0.5)\pi s\right],
	\end{align*}
	where $\mu(s)$ is a time-dependent function for $s \in [0,1]$, $g = 2$ for the MWW case, and $g = 2, 3$ for the KW setting. This generative process does not include measurement error so we additionally consider two measurement error settings, each of the form $Y_{1,i}(s) = X_{1,i}(s) + \epsilon_i(s)$ and $Y_{g,j}(s) = X_{g,j}(s) + \epsilon_j(s)$. The error term can be white noise, $\epsilon_i(s) \stackrel{iid}{\sim} N(0,1)$, or auto-correlated noise, $\boldsymbol{\epsilon}_i = \{\epsilon_i(s_1), \ldots, \epsilon_i(s_S)\}' \sim N(0, \Sigma)$ where $\Sigma$ is an AR(1) correlation matrix with $\rho = 0.5$ and variance equal to 1.

	
	
	We consider three different functional forms for $\mu(s)$: 
	\begin{align*}
		\mu_1(s) = \xi s,\ &\mu_2(s) = \xi4s(1-s), \text{ and}\\
		\mu_3(s) = \xi m^{-1} &B(2, 6)^{-1} s^{2-1} (1-s)^{6-1},
	\end{align*}
	for $m$ equal to $\max_s B(2, 6)^{-1} s^{2-1}(1-s)^{6-1}$ and $B(2,6)$ equal to the Beta function evaluated at 2 and 6. The parameter $\xi$ is set to zero to evaluate Type I Error and incremented up to 3, by steps of size 0.12, to evaluate power. The length of the sampling grid, $s$, varies from sparser to denser: $S = 40, 120,$ and 360 and is taken to be equally spaced over the unit interval. We limit our direct comparison to the depth-based method, since it alone has publicly available code. However, we note that the forms $\mu_1(s)$ and $\mu_2(s)$ are used by \cite{Chak2015}. All curves are preprocessed using FACE, retaining 99\% of the variability. Code to generate the empirical studies is in Supplement B \citep{DRTSup2024B}.


\begin{table}
	\centering
	\caption{Type I Error for doubly ranked Mann-Whitney-Wilcoxon tests by $S$, distribution of $Z_k$, and $n$ under AR(1) noise. Each value is based on 10000 simulated datasets. Values closest to nominal are bolded. G stands for Gaussian, T for $t_2$, Avg. for the average rank, and Suff. for the sufficient statistic. \label{t:mwwe}}
	\begin{tabular}{lllccc}
		  \hline
		  \multirow{2}{*}{$S$} & \multirow{2}{*}{$Z_k$} & \multirow{2}{*}{$n$} &  \multicolumn{2}{c}{Doubly Ranked} & \multirow{2}{*}{Depth-based}  \\
		 \cline{4-5}
		  &  &  & Avg. & Suff. & \\ 
		  \hline
		 40 & G  & 20 & \bf 0.046 & 0.044 & 0.030 \\ 
		 & &  50 & 0.052 & \bf 0.051 & 0.044 \\ 
		 & & 100 & \bf 0.050 & \bf 0.050 & 0.048 \\ 
		\cline{3-6}
		 & T & 20 & 0.045 & \bf 0.046 & 0.033 \\ 
		 & &  50 & 0.049 & \bf 0.050 & 0.042 \\ 
		 & & 100 & 0.049 & \bf 0.050 & 0.045 \\ 
		 \cline{2-6}
		 120 & G  & 20 & \bf 0.044 & \bf 0.044 & 0.034 \\ 
		 & &  50 & \bf 0.052 & \bf 0.052 &  0.042 \\ 
		 & & 100 & \bf 0.050 & 0.049 & 0.048  \\ 
		\cline{3-6}
		 & T & 20 & 0.044 & \bf 0.045 & 0.031 \\ 
		 & &  50 & \bf 0.049 & \bf 0.049 &  0.040 \\ 
		 & & 100 & 0.051 & \bf 0.050 &  0.044 \\ 
		 \cline{2-6}
		 360 & G  & 20 & \bf 0.044 & 0.042 & 0.033 \\ 
		 & &  50 & \bf 0.050 & \bf 0.050 & 0.042 \\ 
		 & & 100 & \bf 0.050 & 0.049 & \bf 0.050 \\ 
		\cline{3-6}
		 & T & 20 & 0.046 & \bf 0.047 & 0.030 \\ 
		 & &  50 & \bf 0.050 & 0.048 &  0.045 \\ 
		 & & 100 & \bf 0.050 & 0.049 & 0.045  \\ 
		 \hline
	\end{tabular}
\end{table}





\begin{table}
	\centering
	\caption{Type I Error for doubly ranked Kruskal-Wallis tests by $S$, distribution of $Z_k$, and $n$ under AR(1) noise. Each value is based on 10000 simulated datasets. Values closest to nominal are bolded. G stands for Gaussian, T for $t_2$, Avg. for the average rank, and Suff. for the sufficient statistic. \label{t:kwe}}
	\begin{tabular}{lllcc}
		  \hline
		  \multirow{2}{*}{$S$} & \multirow{2}{*}{$Z_k$} & \multirow{2}{*}{$n$} &  \multicolumn{2}{c}{Doubly Ranked} \\
		 \cline{4-5}
		  &  &  & Avg. & Suff.  \\ 
		  \hline
		 40 & G  & 30 & 0.047 & \bf 0.050 \\ 
		 & &  75 & \bf 0.048 & 0.046  \\ 
		 & & 150 & 0.046 & \bf 0.047  \\ 
		\cline{3-5}
		 & T & 30 & 0.046 & \bf 0.048 \\ 
		 & &  75 & 0.051 & \bf  0.050 \\ 
		 & & 150 & \bf 0.048 & 0.046  \\ 
		 \cline{2-5}
		 120 & G  & 30 & 0.047 & \bf 0.048 \\ 
		 & &  75 & \bf 0.046 & \bf 0.046 \\ 
		 & & 150 & \bf 0.046 &  0.045 \\ 
		\cline{3-5}
		 & T & 30 & 0.045 & \bf 0.046 \\ 
		 & &  75 & \bf 0.048 &  0.047 \\ 
		 & & 150 & \bf 0.049 & \bf 0.049 \\ 
		 \cline{2-5}
		 360 & G  & 30 & \bf 0.049  & 0.048 \\ 
		 & &  75 & 0.045 & \bf 0.046  \\ 
		 & & 150 & \bf 0.047 & \bf 0.047  \\ 
		\cline{3-5}
		 & T & 30 &0.047 & \bf 0.048 \\ 
		 & &  75 & 0.046 & \bf 0.047  \\ 
		 & & 150 & \bf 0.048 & \bf 0.048 \\ 
		 \hline
	\end{tabular}
\end{table}

\begin{figure*}
	\centering
	\includegraphics[width = 2.1in, height = 2.1in]{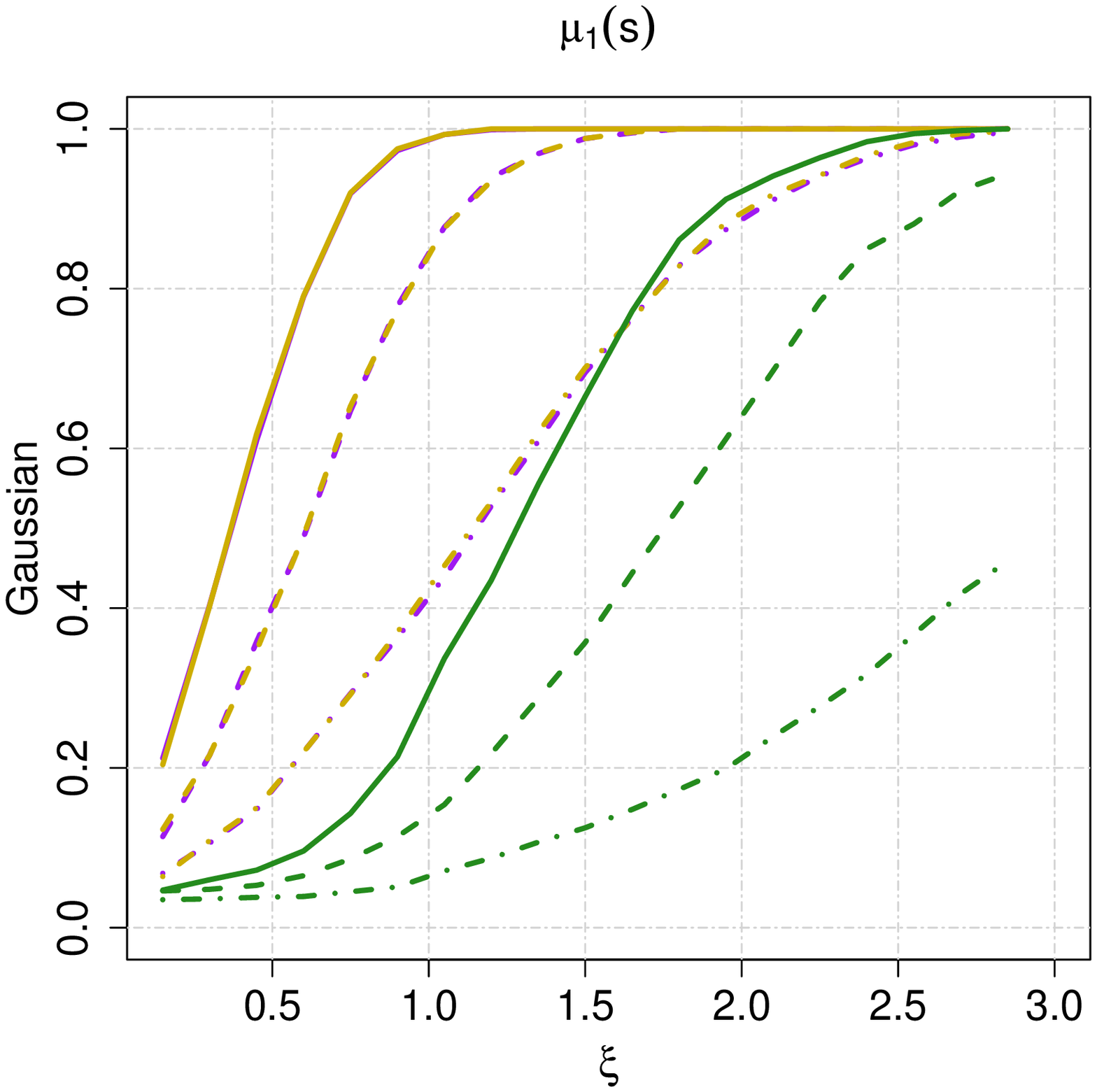}
	\includegraphics[width = 2.1in, height = 2.1in]{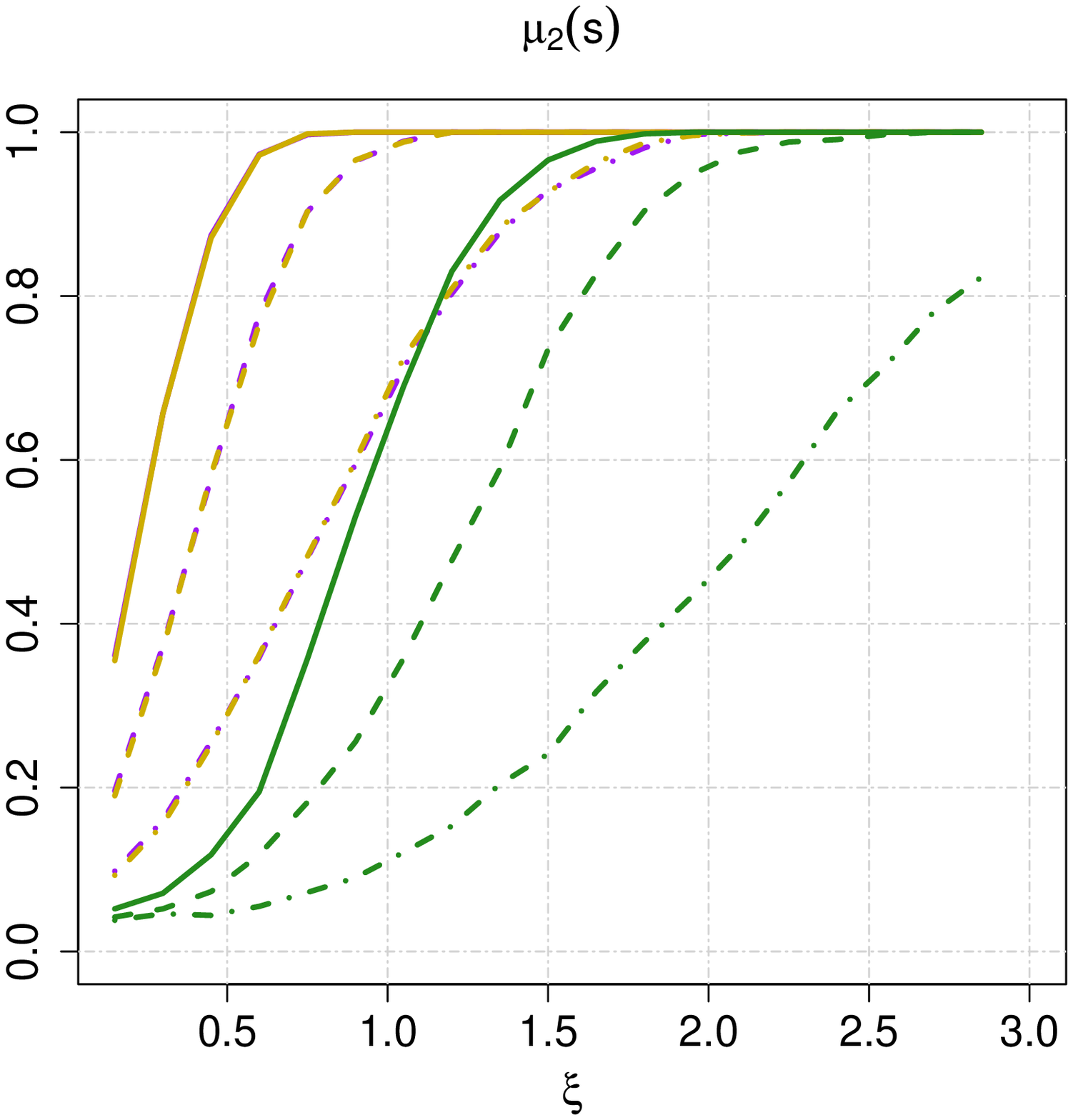}
	\includegraphics[width = 2.1in, height = 2.1in]{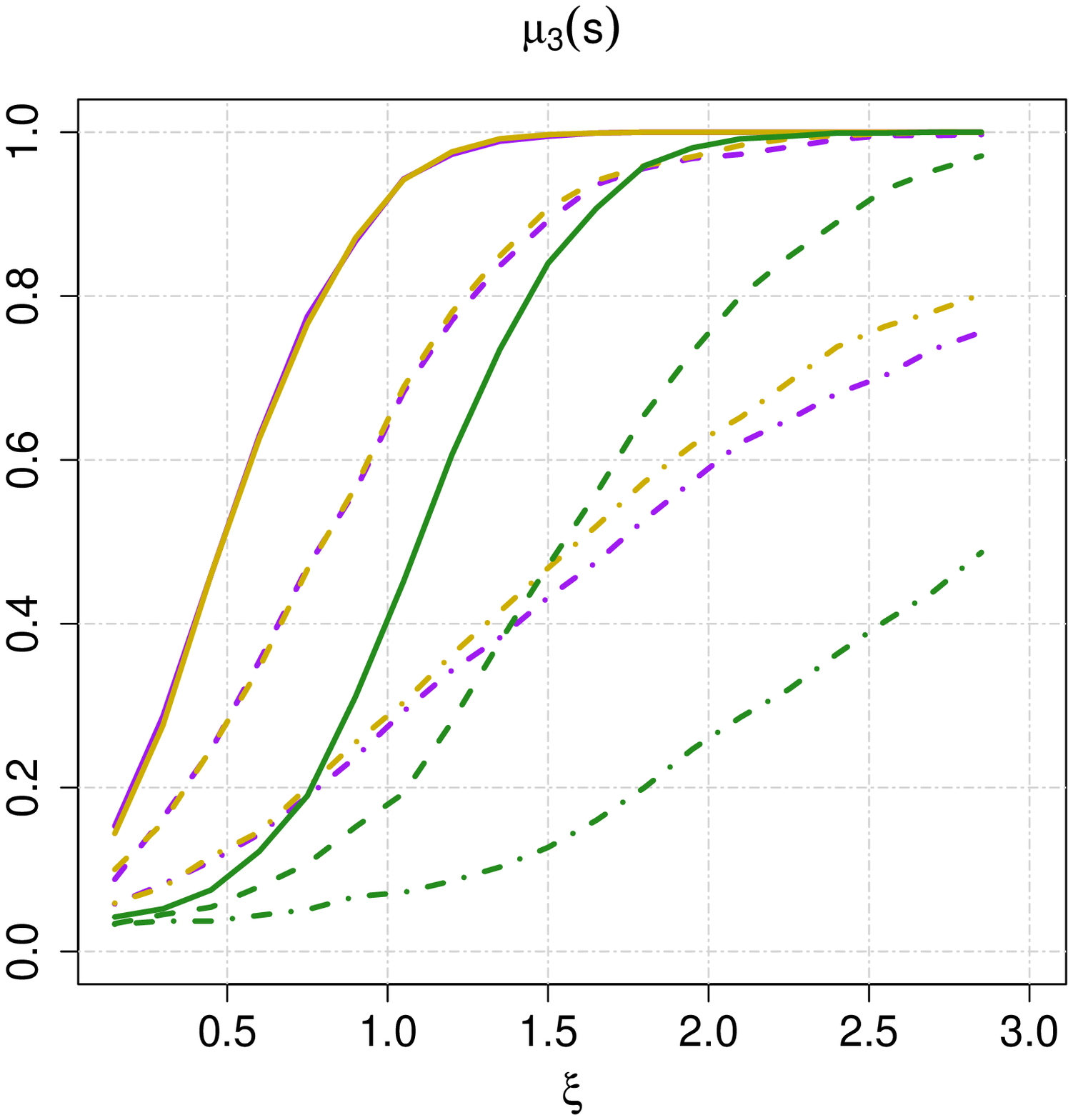}
	\includegraphics[width = 2.1in, height = 2.1in]{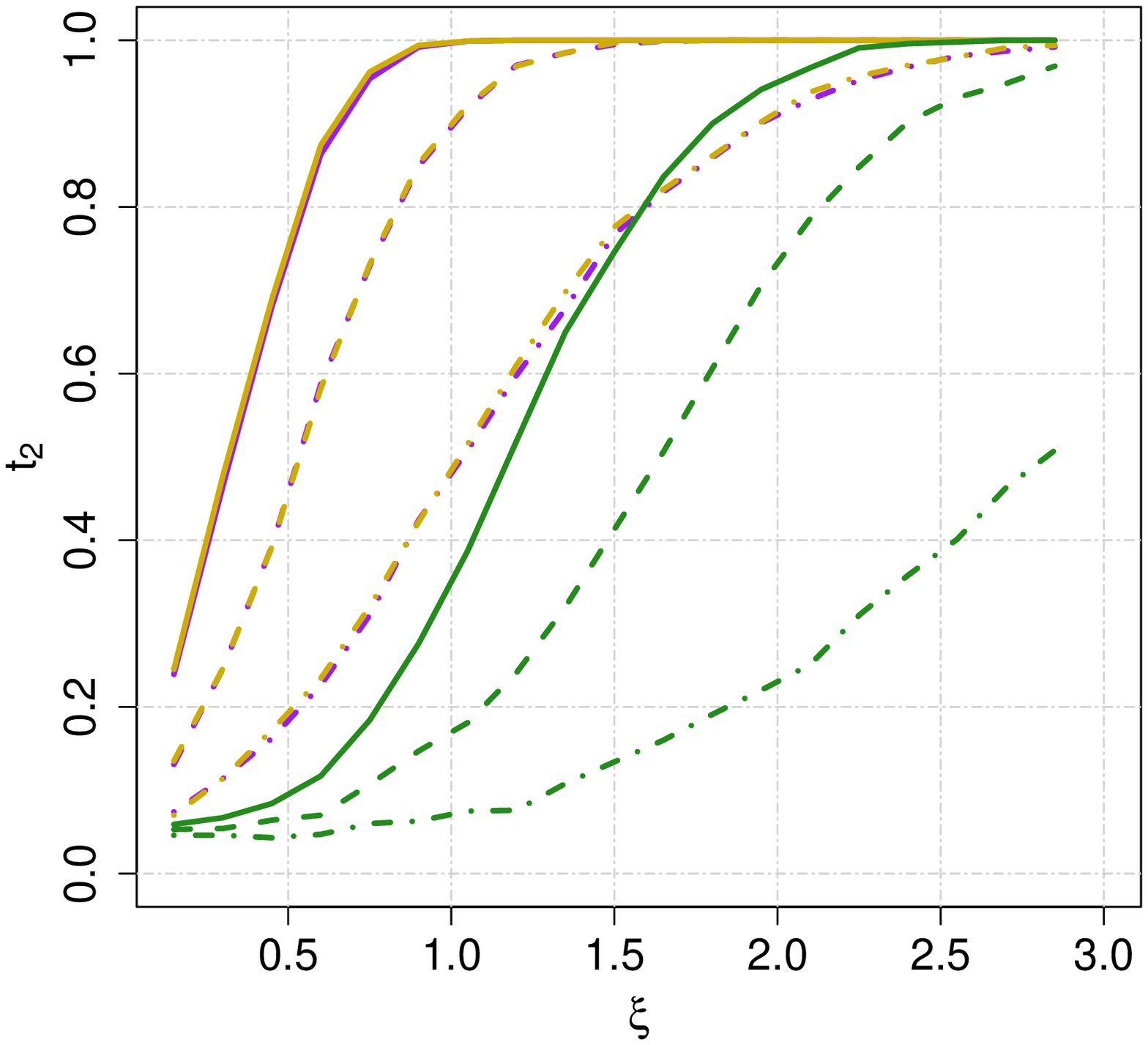}
	\includegraphics[width = 2.1in, height = 2.1in]{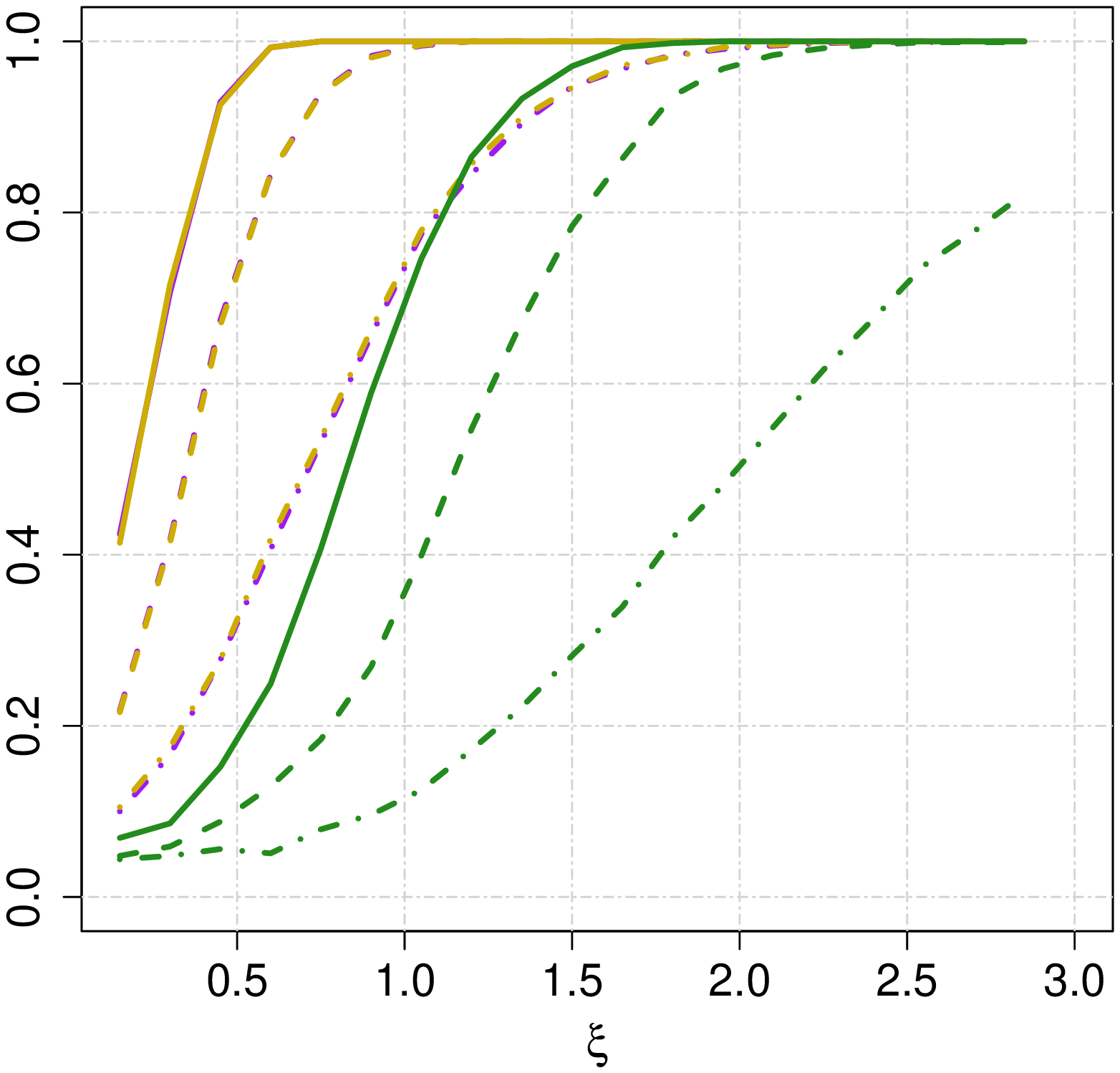}
	\includegraphics[width = 2.1in, height = 2.1in]{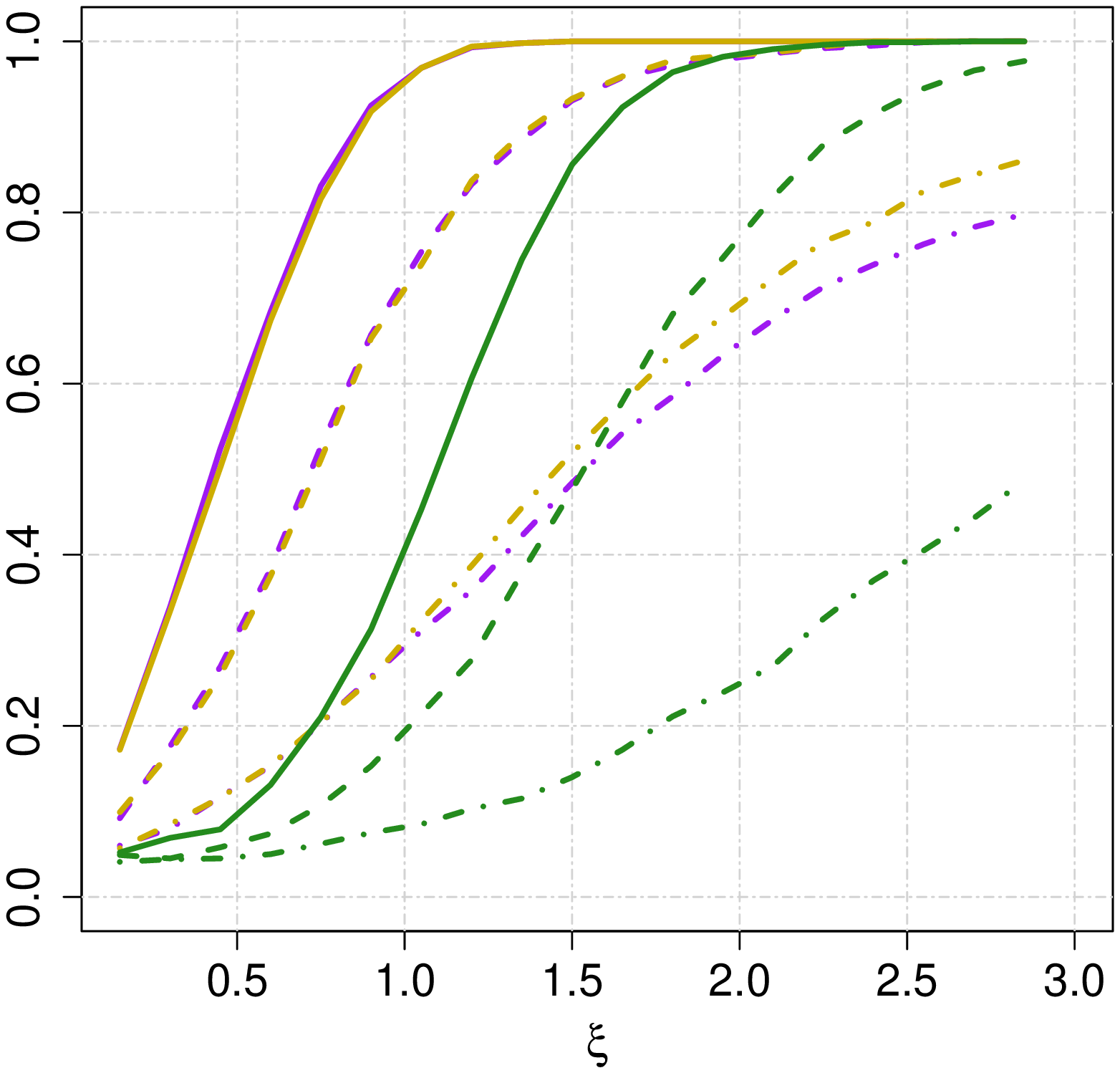}
	\caption{Power curves for functional Mann-Whitney-Wilcoxon tests under the sparsest sampling density, $S = 40$, with AR(1) noise. Rows index varying distributions of $Z_k$, columns index values of $\mu(s)$. Curves for the doubly ranked test in purple (sufficient statistic) and gold (average), curves for depth-based tests are in green. Solid curves are for when $n = 100$, dashed curves for when $n = 50$, and dotted-dashed curves for when $n = 20$.\label{f:pm40ar}}
\end{figure*}

\begin{figure*}
	\centering
	\includegraphics[width = 2.1in, height = 2.1in]{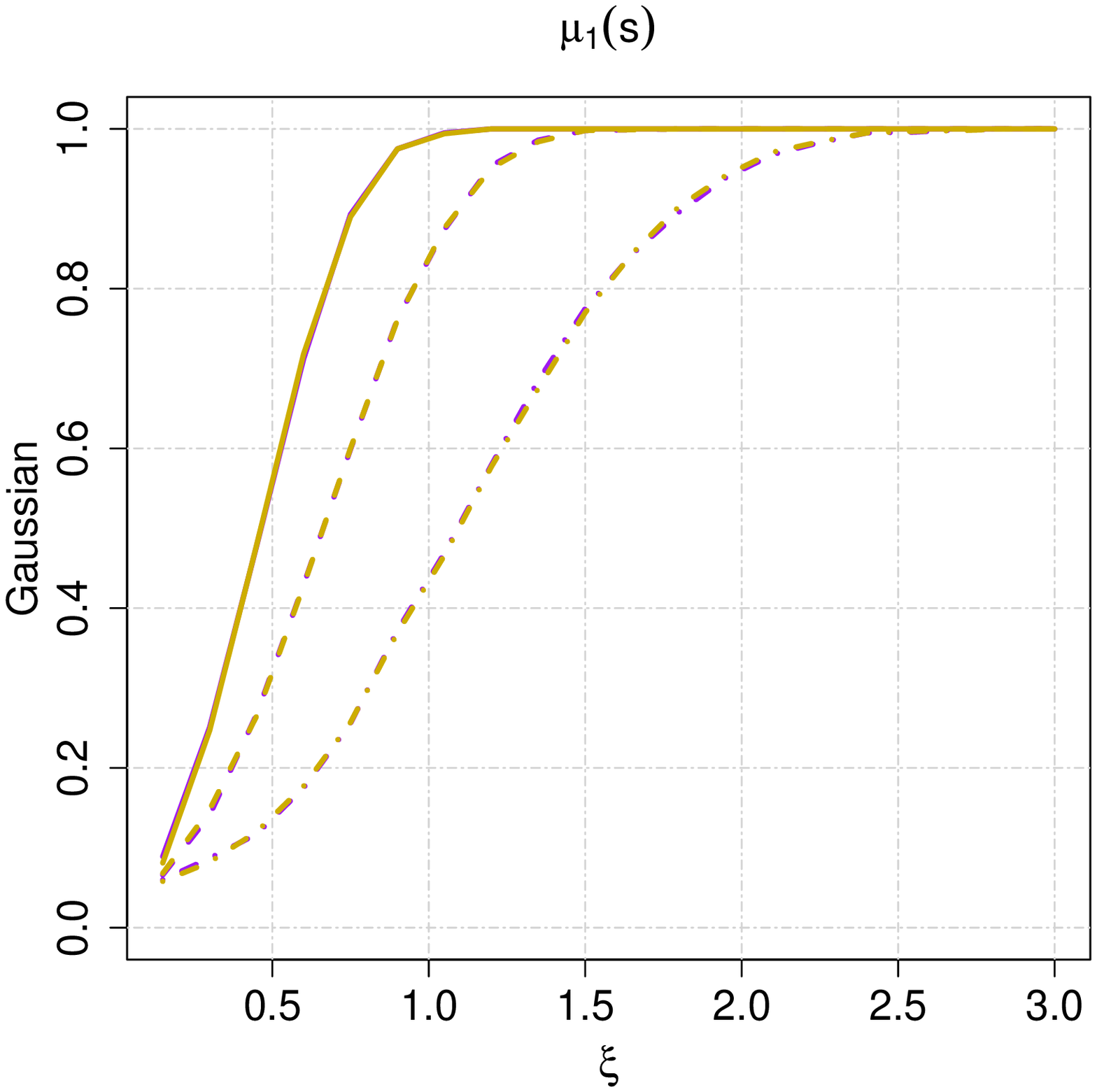}
	\includegraphics[width = 2.1in, height = 2.1in]{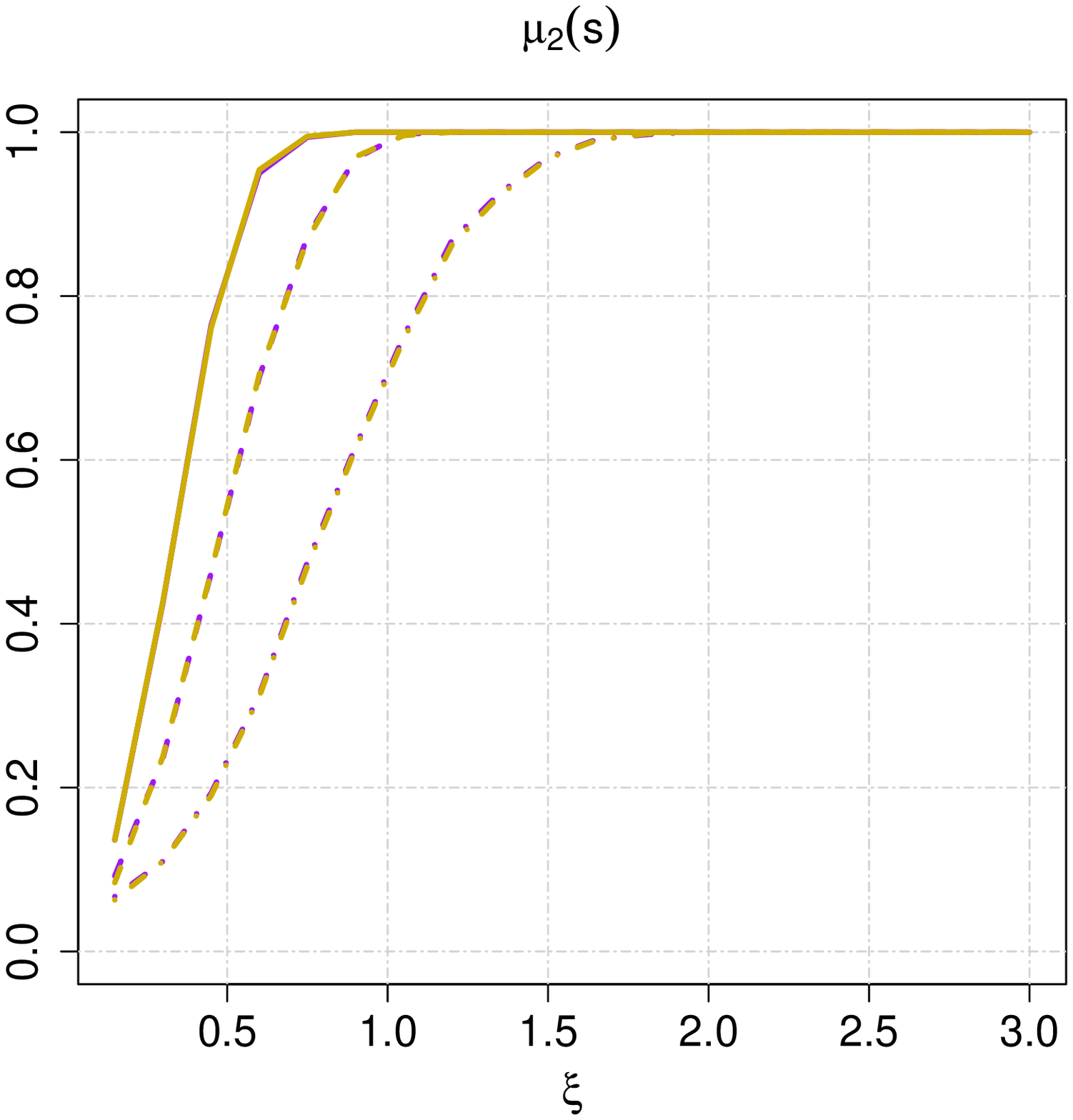}
	\includegraphics[width = 2.1in, height = 2.1in]{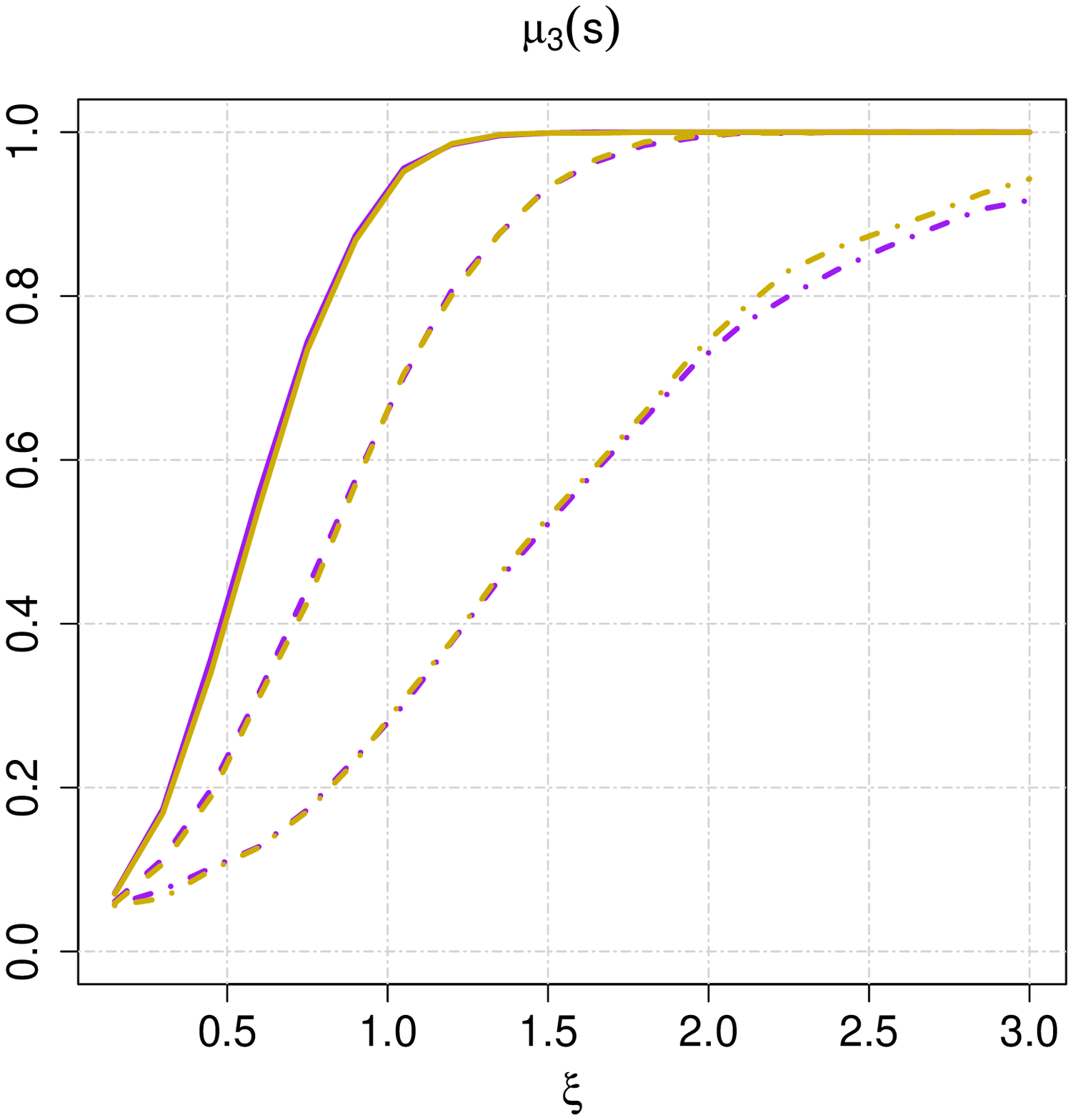}
	\includegraphics[width = 2.1in, height = 2.1in]{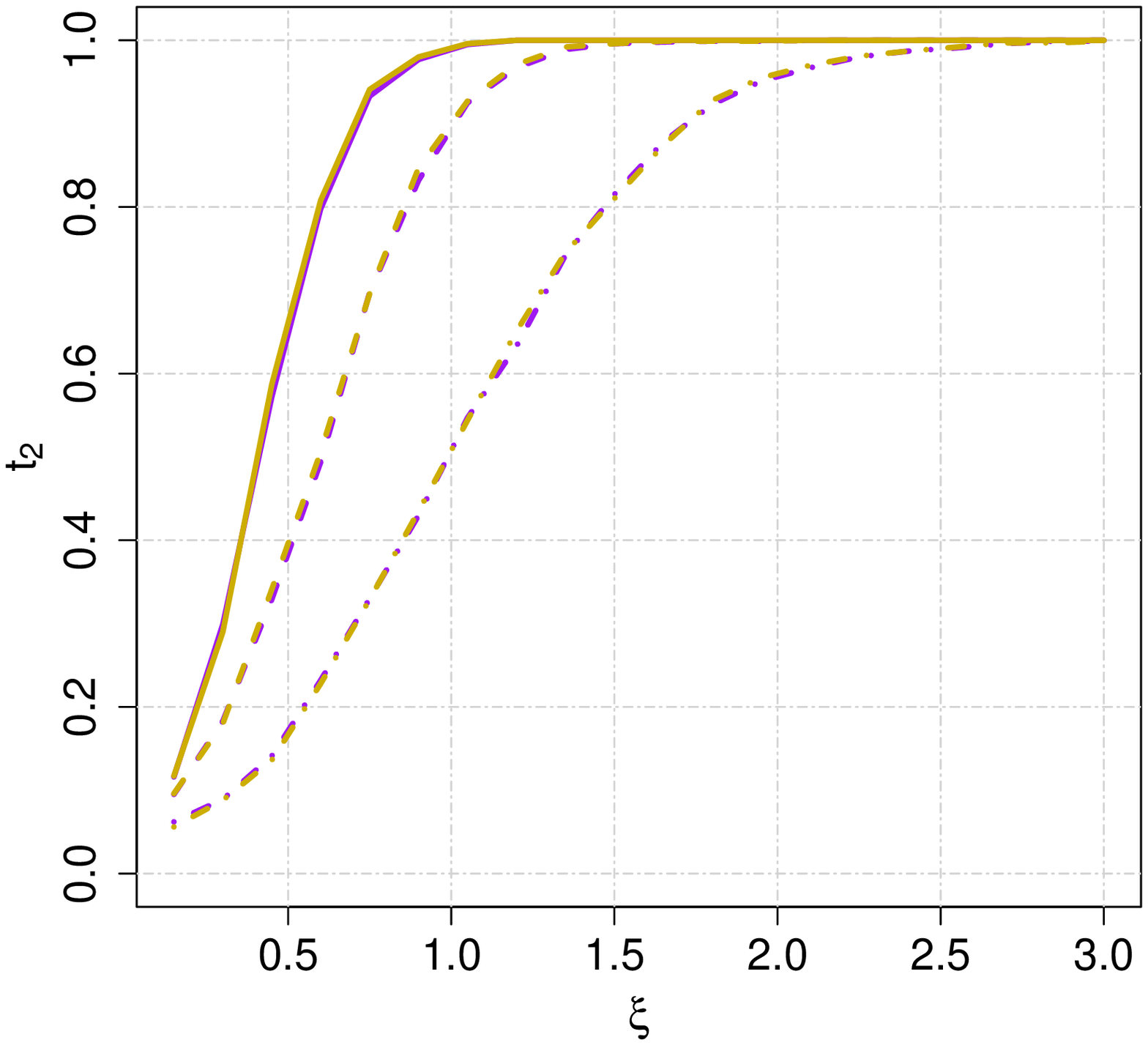}
	\includegraphics[width = 2.1in, height = 2.1in]{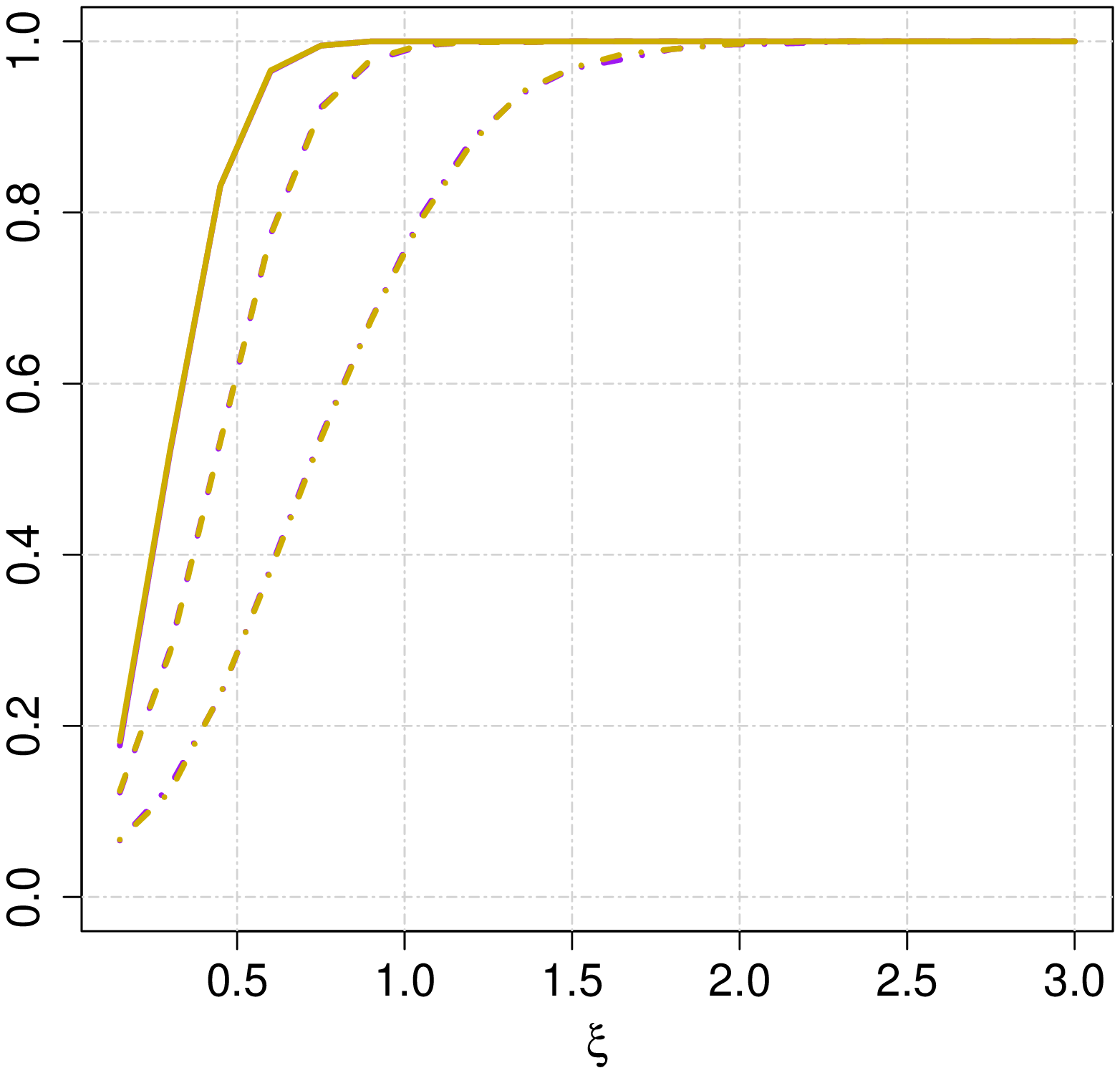}
	\includegraphics[width = 2.1in, height = 2.1in]{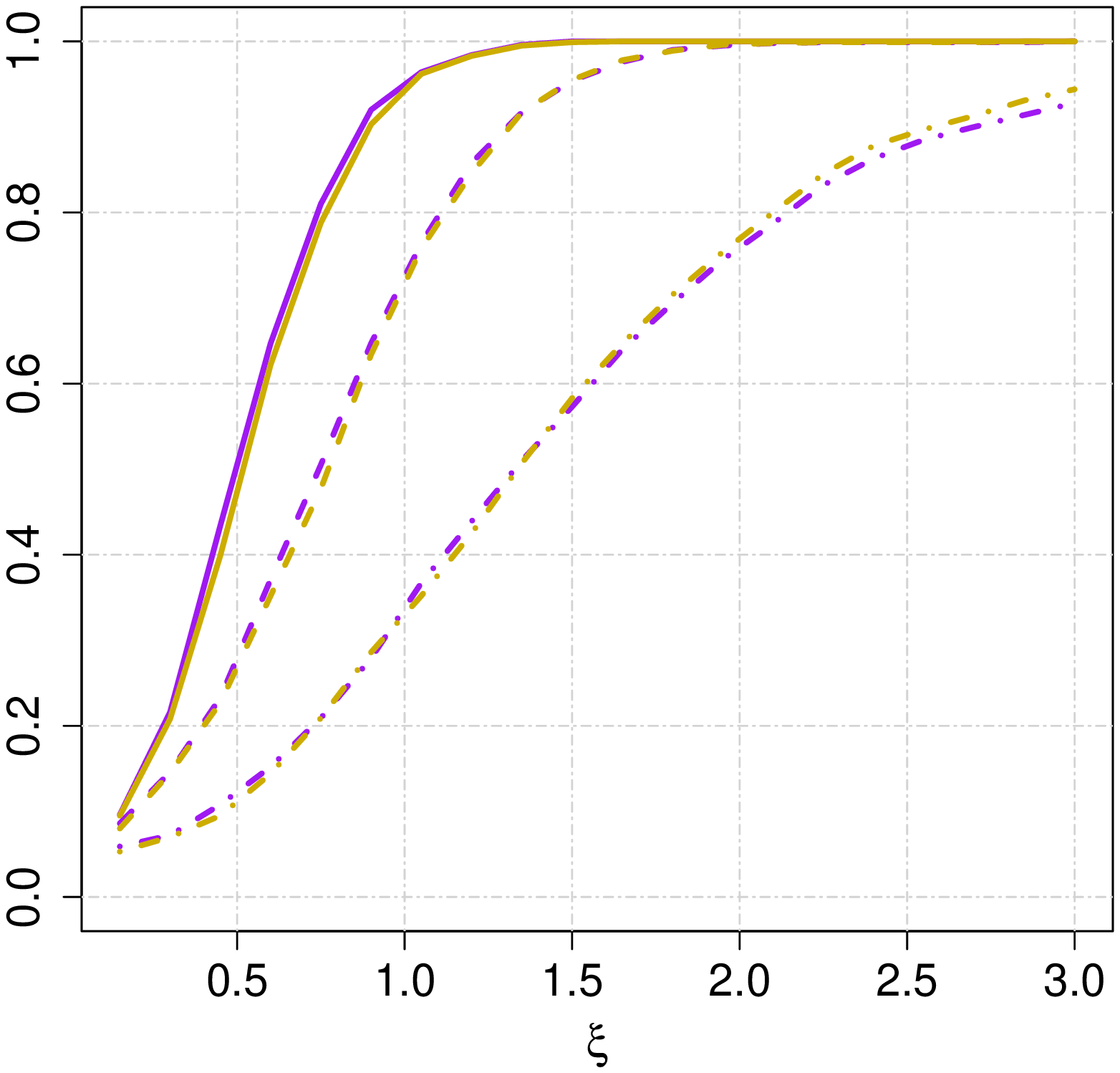}
	\caption{Power curves for functional Kruskal-Wallis tests under the sparsest sampling density, $S = 40$, with AR(1) noise. Rows index varying distributions of $Z_k$, columns index values of $\mu(s)$. Curves for the doubly ranked test in purple (sufficient statistic) and gold (average). Solid curves are for when $n = 100$, dashed curves for when $n = 50$, and dotted-dashed curves for when $n = 20$.\label{f:pk40ar}}
\end{figure*}




	Type I error estimates for the the doubly ranked MWW tests and the depth-based test are presented in Table~\ref{t:mwwe} while Table~\ref{t:kwe} contains type I error for the doubly ranked KW tests. Tables are broken down by $S$, distribution of $Z_k$ and total sample size, $n$. They contain only results for the AR(1) noise simulation. Each empirical type I error rate is based on 10000 simulated datasets. Testing was performed at the $\alpha = 0.05$ level. Bolded figures have the closest to nominal type I error rate. Regardless of $S$, $Z_k$, or $n$, all methods produce rates that are either below nominal or at most within Monte Carlo error of it. In the MWW case, a doubly ranked test is always closest to nominal (or tied for closest) and in only one instance is the depth-based tied for closest. Type I error tends to be slightly smaller under the smallest sample size but the doubly ranked tests never fall below 0.042 whereas the depth-based test has type I error consistently below 0.035 when $n = 20$. Comparing between the doubly ranked tests, the sufficient statistic tends to be closer to nominal overall, however the average rank performs similarly in a number of settings. The sufficient statistic's advantage is more noticeable in the KW case, although the type I error rates for the average rank are similar. Additional results for type I error under the other measurement error settings, no noise and white noise, are in Section 2 of Supplement A \citep{DRTSup2024A}.

Power curves are based on 500 simulated datasets for each combination of $\xi$, $S$, $n$, $\mu(s)$, and distribution of $Z_k$. Figure~\ref{f:pm40ar} presents the curves under AR(1) noise and the sparsest sampling density ($S = 40$) since increasing the sampling density only modestly impacts the power of the doubly ranked MWW tests. Solid curves denote the settings where $n = 100$, dashed were $n = 50$, and dotted-dashed for when $n = 20$. Curves for the doubly ranked MWW tests are in purple (sufficient statistic) and gold (average rank) while curves for the depth-based MWW test are in green. The rows of Figure~\ref{f:pm40ar} index the distribution of $Z_k$ and the columns index the varying functional forms of $\mu(s)$. Overall, power increases for all approaches as $n$ increases, no matter the distribution of $Z_k$ and the form of $\mu(s)$. Comparing the two frameworks and at a given $n$, the doubly ranked MWW tests have consistently higher power than the depth-based MWW test. Choice of summary does not substantially impact power as the curves are nearly identical between the sufficient statistic and the average rank. Similar power curves to those in Figure~\ref{f:pm40ar} when $S = 120$ or 360 are in Section 2 of Supplement A \citep{DRTSup2024A} along with curves for the remaining measurement error simulations. All methods tend to have slightly more power in the presence of less noise. Power also increases slightly as the density of the sampling grid increases.

Figure~\ref{f:pk40ar} contains the power curves for the doubly ranked KW tests when $S = 40$ under AR(1) noise. As with the MWW tests, the curves are based on 500 simulated datasets per combination of $\xi$, $n$, $\mu(s)$, and distribution of $Z_k$. Solid curves now denote the settings where $n = 150$, dashed where $n = 75$, and dotted-dashed for when $n = 30$. The sufficient statistic curves are in purple while the average rank curves are in gold. Overall, power increases as $n$ increases across $Z_k$ and forms of $\mu(s)$. The form of $\mu(s)$ does have a slight impact on power when $n$ is small but the distribution of $Z_k$ does not appear to have much of an impact. The sufficient statistic performs slightly better under the $\mu_3(s)$ setting but overall, the power curves are largely indistinguishable between summary statistics. Additional power curves when $S = 120$ and 360 and for less noisy simulations are in Section 2 of Supplement A \citep{DRTSup2024A}. As with the MWW case, less noise tends to lead to slightly increased power as does increasing the density of the grid.

\section{Data Illustrations}
\label{s:app}

To illustrate both doubly ranked MWW and KW tests, we examine functional data from three different substantive fields with varying numbers of observations and differing lengths of functions. Two of the datasets are available in \texttt{R} packages and the third is available in Supplement B \citep{DRTSup2024B}. Each dataset has at least one factor of interest that may differentiate the outcome curves. While we will draw light conclusions based on the findings, the purpose of these illustrations is to demonstrate the use of doubly ranked tests and not to elaborate on the scientific findings (or lack-thereof) that result from our analysis. Results, using the sufficient statistic, from all three data sources for various combinations of outcomes and factors are in Table~\ref{t:data}. Section 3 of Supplement A contains a similar table using the average rank \citep{DRTSup2024A}. All outcomes were first pre-processed using FACE, retaining 99\% of the variation within the curves. Code to run each example is available in Supplement B \citep{DRTSup2024B}.

\begin{table}
\centering
\caption{ Test statistics and $p$-values for doubly ranked tests using the sufficient statistic applied to the three data illustrations: resin viscosity, Canadian weather, and COVID mobility. Temp. is short for temperature, Cur. for curing, Rot. for rotational, and Precip. for precipitation. Doubly ranked KW tests were performed for both Canadian weather outcomes and the COVID mobility factor MD, VA, \& WV. The remaining tests were doubly ranked MWW tests. \label{t:data}}
\vspace{5pt}
\begin{tabular}{lllcc}
  \hline
\multirow{2}{*}{Data Set} & \multirow{2}{*}{Outcome} & \multirow{2}{*}{Factor} &  Test & \multirow{2}{*}{$p$-value}  \\
 &   &  & Statistic &  \\ 
  \hline
Resin & 
Viscosity  &  Resin Temp. & 194 & $<0.001$   \\
 & & Cur. Agent Temp. & 440 & 0.337   \\
 & & Tool Temp. & 207 & $<0.001$   \\
& & Rot. Speed & 419.5 & 0.217  \\
& & Mass Flow  & 507.5 & 0.957  \\
\cline{2-5}
Weather & Temp.  &  Region & 21.44 & $<0.001$   \\
\cline{3-5}
& Precip.  &  Region & 22.46 & $<0.001$  \\ 
\cline{2-5}
COVID & Requests & CO \& UT & 132 & $<0.001$  \\
 & & IA \& MN & 1871 & $<0.001$ \\
 & & MD, VA, \& WV & 2.214 & 0.331   \\
    \hline
\end{tabular}
\end{table}


\subsection{Resin Viscosity}

The resin viscosity data comes from an experimental setting conducted at the Technical University of Munich's Institute for Carbon Composites and is freely available in the \texttt{R} package \texttt{FDBoost} \citep{FDboost2020}. The data set contains measurements of the viscosity of resin over the course of 838 seconds with the goal of assessing factors that influence the curing process in a resin mold. The grid is not equally spaced due to technical reasons whereby viscosity can be measured every two seconds at first but, due to hardening, can only be measured every ten seconds after the 129th second. In total, 64 different molds were poured under five different experimental conditions: temperature of resin, temperature of the curing agent, temperature of the tools, rotational speed, and mass flow. Each condition is a binary factor, with a ``low'' and ``high'' levels.

	\begin{figure}
		\centering
		\includegraphics[width = 2.1in, height = 2.1in]{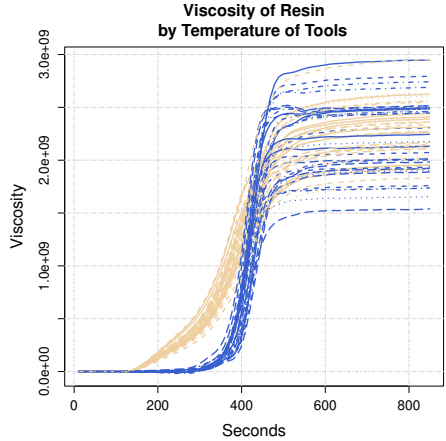}
		\caption{Blue curves were measured under the ``low'' temperature condition while tan curves were measured under the ``high'' temperature condition.\label{f:resin}}
	\end{figure}

From Table~\ref{t:data}, we observe that in two of five factors the ranks tend to significantly differ between groups. Specifically, the resin temperature ($W = 194$, $p <0.001$) and tool temperature ($W = 207$, $p < 0.001$) have $p$-values below nominal. This suggests that the ranks of the curves that describe the viscosity of the resin differ significantly between the groups defined by these factors. To determine which group is higher or lower and when, we turn to a graphical assessment. When considering the tool temperature, Figure~\ref{f:resin} displays clear clustering in the curves with low temperature tools appearing to induce slower curing than high temperature tools. This separation lessens around 400 seconds where the sets of curves begin to overlap. The analysis by temperature of resin captures greater separation after 400 seconds which can be seen in the additional figures in Section 3 of Supplement A \citep{DRTSup2024A}. The remaining factors did not significantly distinguish the curves as can also be seen in the graphs in Section 3 of Supplement A \citep{DRTSup2024A}.



\subsection{Canadian Weather}

The Canadian weather data is a classic dataset for illustrating functional data analysis techniques having appeared in texts by \cite{Ramsay2005} and \cite{Ramsay2009} as well as in many articles that are too numerous to list here. It contains daily average temperature (C) and precipitation (mm) measurements, averaged over the years 1960 to 1994, taken at 35 different sites across Canada. The factor of interest is the region from which the measurement was taken: Arctic, Atlantic, Continental, or Pacific. It is available in the \texttt{R} package \texttt{fda} \citep{fda2022}. For the purpose of this illustration, the goal of the analysis is to see if the doubly ranked KW test can detect differences in temperature and precipitation curves by region as might be expected given their geographic differences.

	\begin{figure}
		\centering
		\includegraphics[width = 2.1in, height = 2.1in]{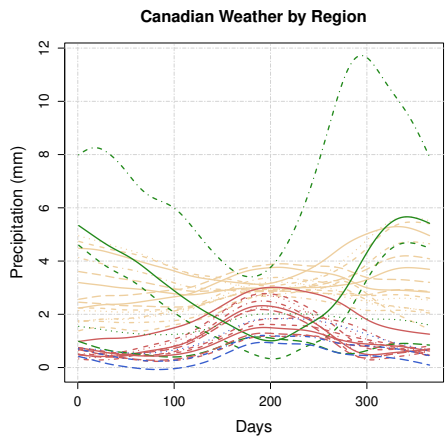}
		\caption{Blue curves denote the stations in the Arctic region, tan curves indicate stations the Atlantic region, red curves for stations in the Continental region, and green curves are the Pacific region.\label{f:precip}}
	\end{figure}

Table~\ref{t:data} contains the results of the analysis for both outcomes. We see that the doubly ranked KW returns significant tests at the nominal level for both temperature curves ($X^2 = 21.44$, $p <0.001$) and precipitation curves ($X^2 = 22.46$, $p < 0.001$) differing by region. Formally, we would say the location parameters of the distributions of the summarized ranks for temperature and precipitation over time differ by region. To observe the nature of the difference by group, we examine Figures~\ref{f:cwtemp} and~\ref{f:precip}. We see that the temperature curves from the Arctic region are the coldest while those from the Pacific region tend to be warmest. The precipitation curves show inverted patterns based on region with the Pacific region tending to see more precipitation during the winter months while the Continental and Arctic regions experience more during the summer months.

\subsection{COVID Mobility}

The COVID mobility data was extracted from publicly available data provided by Apple Inc. regarding daily mapping requests for driving, walking, and transit directions during the onset of the COVID-19 pandemic and for some time thereafter. A cached version of the data is available from \cite{Gassen2022}. The data itself represents the percent change in requests for directions from a baseline date of January 13, 2020. Requests are categorized by whether they were for driving, walking, or transit directions. In our analysis, we focus on the percent change in daily requests for driving directions. For the United States, the data is available at the county level which is our unit of observation.

Rather than aligning the data by calendar time, we align it by the first date of a major COVID policy implementation at the state level, e.g. shelter-in-place or safer-at-home orders. Policy dates were obtained from the publicly available COVID AMP data \citep{Katz2023}. The timeline is centered at the date of the first COVID-policy implementation in each state and we take 30 days prior and 30 days post. The main objective of this analysis is to compare different states in similar regions to each other, focusing on whether or not there is a difference post-policy implementation. Specifically, we compare county level percent changes in driving requests between counties in Iowa (IA) and Minnesota (MN), Colorado (CO) and Utah (UT), and finally between Maryland (MD), Virginia (VA), and West Virginia (WV). Data for this illustration is available in Supplement B \citep{DRTSup2024B}.

	\begin{figure}
		\centering
		\includegraphics[width = 2.1in, height = 2.1in]{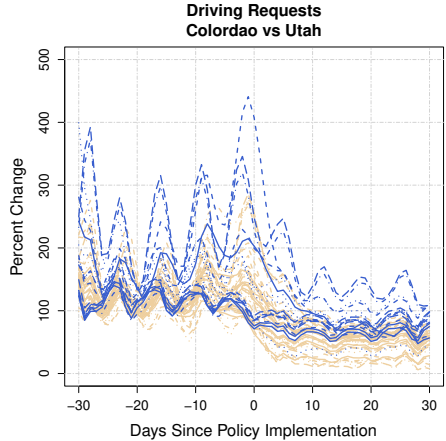}
		\caption{Blue curves denote counties in Utah while tan curves denote counties in Colorado.\label{f:cout}}
	\end{figure}

When comparing the percent change in county-level requests for driving directions between Colorado and Utah, we find a significant difference between the groups of curves by state ($W = 132$, $p = 0.001$). Figure~\ref{f:cout} shows that the county-level curves have substantial overlap pre-initial policy implementation. After implementation, however, Colorado counties tend to experience a lower percent change in requests. Graphs for the remaining comparisons are in Section 3 of Supplement A \citep{DRTSup2024A}. We observe a similar trend when comparing Minnesota and Iowa with Minnesota counties tending to experience lower percent change in driving requests which continues until three to four weeks post-implementation. This is reflected in the doubly ranked MWW test which suggests a significant difference in the location parameters of distributions of the summarized ranks between Minnesota and Iowa ($W = 1871$, $p < 0.001$). In contrast to the first two comparisons, we lack evidence to suggest a difference in the ranks of the curves between Maryland, Virginia, and West Virginia ($X^2 = 2.214$, $p = 0.331$).


	\section{Discussion}
	\label{s:disc}
	
	In this manuscript, we investigate two novel nonparametric tests for comparing groups of functional curves: the doubly ranked MWW test and the doubly ranked KW test. Both tests rely on the same doubly ranked framework which ranks first by time point before summarizing and then reranking for analysis. We consider two ways to construct a summary statistic of the ranks over time, first deriving a sufficient statistic under $H_0$ and then examining the properties of the average rank under $H_0$. The summarizing step is a data reduction step, thus we prefer the sufficient statistic given the sufficiency principle. We also observe some advantages in simulation to using the sufficient statistic. However, the average rank relies on central tendency and performs similarly. Thus, this summary could be of use in clinical settings where an average rank would be more digestible to non-specialists.
	While other approaches reduce dimensionality using a summary score for the data, only our sufficient statistic is constructed under $H_0$. Doubly ranked tests are distribution free since they ultimately rely only on the ranks of the sufficient statistics or average ranks. We demonstrate empirically that both the doubly ranked MWW and KW tests perform well in terms of type I error and power for different distributions under varying sample sizes, curve types, sampling densities, and measurement error models.
	
	The test procedures we propose here are global tests of differences between sets of curves, as are the other methods by \cite{Hall2007,Lopez2009,Lopez2010,Chak2015,Berrett2021} and \cite{Melendez2021}. With any global test, one must be careful to examine the underlying data for where the differences lie. These tests can instruct us that there is difference but cannot be used to identify exactly where the difference is nor can they tell us the magnitude of difference. For that, we would need to model the group location curves themselves and rely on point-wise testing, ideally adjusted for multiple comparisons. See, for example, some of the work on functional regression reviewed in \cite{Morris2015} and \cite{Greven2017}.
	
	Point-wise testing is not goal of this work, however our method can be useful as a global test in a variety of data contexts. In our data illustrations, we demonstrate the use of the doubly ranked MWW and KW tests to study problems in material science, climatology, and public health policy. 
	These tests can be performed as a main analysis or a first pass analysis in conjunction with other modeling steps. Our tests are also easy to implement and easy for practitioners to interpret---in particular, for those who are already familiar with univariate MWW and KW tests. Doubly ranked testing is not limited to functional data in the form of FACE-preprocessed curves. The assumptions of the tests apply broadly to other preprocessing techniques as well as potentially other functional data types, i.e. surfaces. However, the present work only considers data in the form of curves.
%
	
	Our method assumes the sampling grid for each curve is the same for each subject. Subjects may, however, have mistimed measurements and therefore asynchronous grids. In the current context, an FPCA that accounts for asynchronous grids can be employed should such data arise. For example, the FPCA by smoothed covariance can be used for sparse longitudinal data, see \cite{Yao2005} or \cite{Di2009} among others. Asynchronous functional data is a more general case of sparse longitudinal data. Thus, such an approach could be used for sparsely sampled profiles, although this manuscript only considers the complete curve case.

\begin{appendix}

\section{Proof of Claim~\ref{cl:ef}}\label{appx:ef}


\begin{proof}
Consider the mass function in Equation~(\ref{eq:pmf}) and rewrite it as
\begin{align*}
	&P\bigg[ Z_{(r)} = z \bigg] \approx \frac{\Gamma(n+1)}{\Gamma(r)\Gamma(n - r + 1)} \frac{1}{n}  \\
	&\hspace{15pt} \times  \left(\frac{z}{n} - \frac{1}{2n}\right)^{r-1} \left(1-\frac{z}{n} + \frac{1}{2n}\right)^{n-r}\\
	&= \frac{\Gamma(n+1)}{\Gamma(r)\Gamma(n - r + 1)} \frac{1}{n}  \\
	&\hspace{15pt} \times  \exp\bigg[ n\log\left(1-\frac{z}{n} + \frac{1}{2n}\right) - \log \left(\frac{z}{n} - \frac{1}{2n}\right) \bigg] \\
	&\hspace{15pt} \times \exp\bigg[ r\log \left\{\left(\frac{z}{n} - \frac{1}{2n}\right) \bigg/ \left(1-\frac{z}{n} + \frac{1}{2n}\right) \right\}\bigg]\\
\end{align*}
Noting that $n$ is fixed, define the following functions: 
\begin{align*}
	c(r) &= \frac{1}{n}\frac{\Gamma(n+1)}{\Gamma(r)\Gamma(n - r + 1)},\ w(r) = r,\\
	h(z) &= \exp\bigg[ n\log\left(1-\frac{z}{n} + \frac{1}{2n}\right) - \log \left(\frac{z}{n} - \frac{1}{2n}\right) \bigg],\\
	\text{and}\\
	 t(z) &= \log \left\{\left(\frac{z}{n} - \frac{1}{2n}\right) \bigg/ \left(1-\frac{z}{n} + \frac{1}{2n}\right) \right\}.
\end{align*} 
The mass function can then be written as
\begin{align*}
	P\bigg[ Z_{(r)} = z \bigg]  &= h(z)c(r)\exp\left[ w(r)t(z) \right],
\end{align*}
which is the form of an exponential family. 
\end{proof}

\section{Proof of Claim~\ref{cl:ssm}}\label{appx:ev}


\begin{proof}
First note the mass function of a discrete uniform random variable over $\{1, n\}$ is $P(Z = z) = \frac{1}{n}\ \forall\ z \in \{1, n\}$, $n \in \mathbb{Z}^+$. The expectation of $t(z)$ is
\begin{align*}
	E\left[ t(z) \right] &= E\left[ \log \left\{\left(\frac{z}{n} - \frac{1}{2n}\right) \bigg/ \left(1-\frac{z}{n} + \frac{1}{2n}\right) \right\} \right] \\
	&= \sum_{z = 1}^n \log \left\{\left(\frac{z}{n} - \frac{1}{2n}\right) \bigg/ \left(1-\frac{z}{n} + \frac{1}{2n}\right) \right\} \frac{1}{n}\\
	&= \frac{1}{n} \log \left\{ \prod_{z = 1}^n \left(\frac{z}{n} - \frac{1}{2n}\right) \bigg/ \left(1-\frac{z}{n} + \frac{1}{2n}\right) \right\} 
\end{align*}
Expanding out the first few terms and last few terms of the product, we note that it is a telescoping product:
\begin{align*}
	&\prod_{z = 1}^n \left(\frac{z}{n} - \frac{1}{2n}\right) \bigg/ \left(1-\frac{z}{n} + \frac{1}{2n}\right) = \\
	&\hspace{10pt} \frac{\frac{1}{n} - \frac{1}{2n}}{1-\frac{1}{n} + \frac{1}{2n}} \times \frac{\frac{2}{n} - \frac{1}{2n}}{1-\frac{2}{n} + \frac{1}{2n}} \times \frac{\frac{3}{n} - \frac{1}{2n}}{1-\frac{3}{n} + \frac{1}{2n}} \times \cdots\\
	&\hspace{10pt}\cdots \times \frac{\frac{n-2}{n} - \frac{1}{2n}}{1-\frac{n-2}{n} + \frac{1}{2n}} \times \frac{\frac{n-1}{n} - \frac{1}{2n}}{1-\frac{n-1}{n} + \frac{1}{2n}} \times \frac{\frac{n}{n} - \frac{1}{2n}}{1-\frac{n}{n} + \frac{1}{2n}}\\
	&= \frac{\frac{1}{2n}}{1-\frac{1}{2n}} \times \frac{\frac{1}{n} + \frac{1}{n} - \frac{1}{2n}}{1-\frac{1}{n} -\frac{1}{n} + \frac{1}{2n}} \times \frac{\frac{1}{n} + \frac{2}{n} - \frac{1}{2n}}{1-\frac{1}{n} - \frac{2}{n} + \frac{1}{2n}} \times \cdots\\
	&\hspace{10pt}\cdots \times \frac{1 - \frac{2}{n} - \frac{1}{2n}}{1- 1 + \frac{2}{n} + \frac{1}{2n}} \times \frac{1 -\frac{1}{n} - \frac{1}{2n}}{\frac{1}{n} + \frac{1}{2n}} \times \frac{1 - \frac{1}{2n}}{\frac{1}{2n}}\\
	&= \frac{\frac{1}{2n}}{1-\frac{1}{2n}} \times \frac{\frac{1}{n} + \frac{1}{2n}}{1-\frac{1}{n} -  \frac{1}{2n}} \times \frac{\frac{1}{n} + \frac{2}{n} - \frac{1}{2n}}{1 - \frac{2}{n} - \frac{1}{2n}} \times \cdots\\
	&\hspace{10pt}\cdots \times \frac{1 - \frac{2}{n} - \frac{1}{2n}}{\frac{1}{n} + \frac{2}{n} - \frac{1}{2n}} \times \frac{1 -\frac{1}{n} - \frac{1}{2n}}{\frac{1}{n} + \frac{1}{2n}} \times \frac{1 - \frac{1}{2n}}{\frac{1}{2n}}
\end{align*}
Examining the last two lines, the first fraction is the reciprocal of the last fraction, the second fraction is the reciprocal of the second to last fraction, the third fraction is the reciprocal of the third to last fraction, and so on. 

When $n$ is even, the first $\frac{n}{2}$ fractions will cancel with the last $\frac{n}{2}$ fractions. When $n$ is odd, the first $\left\lfloor\frac{n}{2} - 1\right\rfloor$ fractions will cancel with the last $\left\lfloor\frac{n}{2} - 1\right\rfloor$ fractions. The remaining fraction will be the case where $z = \frac{n+1}{2}$. When $z = \frac{n+1}{2}$, the ratio will be 1. Thus, for any $n \in \mathbb{Z}^+$, the product will equal 1. Substituting, we have
\begin{align*}
	E\left[ t(z) \right] &= \frac{1}{n} \log \left\{ \prod_{z = 1}^n \left(\frac{z}{n} - \frac{1}{2n}\right) \bigg/ \left(1-\frac{z}{n} + \frac{1}{2n}\right) \right\} \\
	&= \frac{1}{n} \log(1) = 0
\end{align*}
Thus, the expected value of the sufficient statistic is zero.
\end{proof}


\end{appendix}

\bibliographystyle{abbrvnat} 
\bibliography{fullbib.bib}       

\end{document}